\newtheorem{theorem}{Theorem}[section]
\newtheorem{lemma}[theorem]{Lemma}
\newtheorem{corollary}[theorem]{Corollary}
\newenvironment{definition}[1][Definition]{\begin{trivlist}
\item[\hskip \labelsep {\bfseries #1}]}{\end{trivlist}}
\begin{document}

\begin{frontmatter}

\title{On the dynamic compressibility of sets}

\author[ufl]{Karthik S. Gurumoorthy\corref{cor1}}
\ead{sgk@ufl.edu}

\cortext[cor1]{Corresponding Author\\
Address:\\
E301, CSE Building, University of Florida,\\
P.O. Box 116120, Gainesville, FL 32611-6120, USA.\\
Ph: 001-352-392-1200, Fax: 001-352-392-1220
}
\address[ufl]{Department of Computer and Information Science and Engineering, University of Florida, Gainesville, Florida, USA}

\begin{abstract}
We define a new notion of compressibility of a set of numbers through the dynamics of a polynomial function. We provide approaches to solve the problem by reducing it to the multi-criteria traveling salesman problem through a series of transformations. We then establish computational complexity results by giving some NP-completeness proofs. We also discuss about a notion of $\epsilon$ K-compressibility of a set, with regard to lossy compression and deduce the necessary condition for the given set to be $\epsilon$ K-compressible. Finally, we conclude by providing a list of open problems solutions to which could extend the applicability the our technique. 
\end{abstract}

\begin{keyword}
Compression, Dynamical systems, Computational complexity, Assignment problem, Sparse-representation
\end{keyword}

\end{frontmatter}

\section{Introduction\label{introduction}}
To the best of our knowledge, all the compression techniques existing in the literature are \emph{static} in nature. By static, we refer to the methods which look for repetitive patterns in the input data and perform compression by encoding the data as codewords. The encoding process typically involves assigning shorter codewords to the most repeated patterns and longer codewords for the seldom ones. A look-up table stores these pattern-codeword matchings and is transmitted along with the encoded data. Decoding process comprises of searching the look-up table and retrieving the pattern corresponding to the received codeword. The number and the length of the codewords are determined by the error correcting capability of the associated technique. The most widely used static compression technique is the Huffman coding--an entropy encoding algorithm used for lossless data compression \cite{Huffman52}. Number theoretic methods like arithmetic encoding \cite{Langdon84} and range encoding \cite{Martin79} schemes have also gained popularity. These entropy based techniques are known to compress messages close to the theoretical limit computed via entropy. Other lossless compression techniques like the Lempel-Ziv-Welch (LZW) compression \cite{Lempel78, Welch84} and its many variants are also widely known in literature. They build the dictionary (codewords) by scanning through the input string for successively longer substrings until they find one that is not in the dictionary.

In many applications, one may encounter scenarios where one would like to compress a \emph{set} as a whole, where the chronological order in which the elements appear in the set are irrelevant. It is acceptable as long as the elements constituting the set are retrieved in some order \emph{regardless} of the order in which they originally featured in the set. Such cases occur quite often especially when transmitting large databases over the internet. If we associate a number to each tuple representing (say) an employee record, the order in which these tuples (numbers) are received at the recipient end may not have any significance. Even if the positional information is needed, bits indicating the position number can be padded to the tuple and then sent. Another case where the positional information of data is of very little relevance is in building a cumulative distribution from empirical data. These scenarios only demand that the data is transmitted as a set and not necessarily as a sequence.

The existing techniques in the literature seems to be blinded to this distinction between the compression a set and that of a sequence. All the aforementioned static compression technqiues could very well be used to compressing a set, as these methods only focusses on the frequency of the repetitive patterns in the input data and hence are invariant to the order in which these patterns appear. Unfortunately, the strength of these techniques is also its weakness. Namely, their compression efficiency is also invariant to an arbitrary permutation of the set and hence remains the same for both a set and for a sequence obtained by imposing an order on the set say by a permutation of its elements. Other lossy dictionary based techniques like the overcomplete DCT, KSVD which tries to compress a given collection of \emph{vectors} by representing them by a corresponding sparse vector built using the dictionary, also relies on an ordering imposed on the set, i.e, one need to explicity identify the first component of the vector, its second component etc and the compression efficiency will depend on the order in which one identifies these components. 

In this work, we introduce a new approach to the compression of a set of numbers through the dynamics of a polynomial function. The fundamental question that we try to answer is: \emph{``What is the computational complexity of determining the permutation of the set for which it is maximally compressible?''}. The maximal compression of a set is connected with the degree of the polynomial function that is used for compression, a notion that will become clearer as we proceed. Before we delve into the intricacies of our technique, let us consider a very simple example. Let us assume that we wish to compactly represent the set $S$ consisting of first 100 natural numbers, namely $S = \{1,2,3,\ldots,99,100\}$. Static compression techniques may first encode this set in binary and then look for patterns in the long sequence of binary digits. Codewords are then assigned to each of the patterns to obtain the encoded data. But given this set $S$, we immediately notice a global structure which these static compression techniques are oblivious of, namely the elements of the set $S$ can be produced in succession by adding $1$ to the previous element. In the terminology of dynamical system, this set $S$ can be produced starting from $1$ and by repeated composition of the function $f(x) = x+1$ with itself. Each composition of the function produces one element of the set. Hence the most succinct representation of this set is the starting element $1$ and the coefficients of the iterating function $f$.

The set defined above is na$\ddot{i}$ve and the structure it exhibits is simple and conspicuous. Instead consider the following set of 10 numbers, 
\begin{equation*}
\tilde{S} = \{0.0016,0.3016,0.0990,0.9178,0.15,0.51,0.9996,0.3567,0.0064,0.0254\}
\end{equation*}
which appears to be totally random. But when rearranged, the elements of the set form the orbit of the first 10 iterations of the quadratic map $f(x) = 4x(1-x)$ starting from $0.15$. Hence the arbitrary looking set $\tilde{S}$ can actually be very compactly represented by encoding only the four numbers--the triplet $(-4,4,0)$ corresponding to the coefficients of the quadratic map and the starting element $0.15$. Numerous other examples can be constructed where a random looking set can actually be produced by iterating a smaller degree polynomial and hence can be efficiently represented. This paper provides ways to test whether a given set imports such structures. 

Let us now begin with the formal definition of when we consider a set to be intrinsically compressible.
\begin{definition}
A set S consisting of $N$ distinct numbers is \emph{K-compressible} if there is exist a polynomial $f$ of degree utmost $K$ and a number $x_0 \in S$, such that $S = \{x_0, f(x_0),\ldots,f^{N-1}(x_0)\}$.
\end{definition}
Here $f^i$ denotes the composition of $f$ with itself $i$ times. We would like to emphasize that the above definition is based on equivalence of sets and not on sequence, i.e., the order in which the elements of $S$ occur in the orbit of $f$ starting at $x_0$ is irrelevant. The immediate question that follows is: \emph{``How to determine whether a given set is K-compressible?''} 

If the sequence information is known, i.e, the sequence in which the elements of $S$ occur in the orbit of $f$ are known a-priori, then it is a fairly easy task to find out whether the given set is K-compressible. Without loss of generality, let $x_1,x_2,\ldots,x_N$ be the known sequence. By the \emph{unisolvence} theorem, there exist a unique polynomial $f$ of degree utmost $K$ with the property that $f(x_i) = x_{i+1}, 1 \leq i \leq K+1 $. The coefficients of $f$ can easily be computed by inverting the corresponding Vandermonde matrix (many faster algorithms like Newton and Lagrange interpolations do not require to invert the Vandermonde matrix). Once the coefficients of $f$ are determined, it is a trivial task to check whether $f(x_j) = x_{j+1}, K+2 \leq j \leq N-1 $.

The problem is interesting only the sequence information is unknown and only when $K < N-2$. When $ K \geq N-2$, by the above argument we see that for each of the possible $N!$ sequences, there exist a unique $f$ of degree utmost $N-2$ which produces the numbers in that sequence. Hence the set is trivially K-compressible. When $K < N-2$, is it the case that all the $\binom {N} {K+2} (K+2)!$ possible sequences need to considered and verified separately before reaching a conclusion? The remaining sections of the paper answers this one question. 

The paper is organized as follows. Section (\ref{property}) states and proves a property of K-compressible sets. In section (\ref{EqvTrans}) we give approaches to check for K-compressibility by reducing it to other well known problems through a series of equivalent transformations. In section (\ref{ComputationalComplexity}) we establish computational complexity results and in section (\ref{epsilonCompressibility}) we discuss the notion of $\epsilon$ K-compressibility with regard to lossy compression and deduce the necessary condition for the given set to be $\epsilon$ K-compressible. We conclude in section (\ref{conclusion}) by providing a list of open problems, some of which are interesting from a theoretical perspective.

\section{Property of K-compressible sets}
\label{property}
The following is a useful property of K-compressible sets.
\begin{lemma}
\label{affineTrans}
Every affine transformation of a K-compressible set is K-compressible.
\end{lemma}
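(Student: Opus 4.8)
The plan is to exploit conjugation of the iterating polynomial by the affine map. Suppose $S = \{x_0, f(x_0),\ldots,f^{N-1}(x_0)\}$ is K-compressible via a polynomial $f$ of degree at most $K$ and a starting point $x_0 \in S$, and let $g(x) = ax + b$ with $a \neq 0$ be an arbitrary (invertible) affine transformation. I would define the conjugated map $h := g \circ f \circ g^{-1}$, where $g^{-1}(y) = (y-b)/a$ is itself affine, and claim that $h$ witnesses the K-compressibility of $g(S)$ with starting point $g(x_0)$.

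First I would verify that $h$ is a polynomial of degree at most $K$: it is the composition of the degree-one polynomial $g$, the degree-at-most-$K$ polynomial $f$, and the degree-one polynomial $g^{-1}$, and since composition of polynomials multiplies degrees we get $\deg h = \deg f \leq K$. Next I would show by induction on $i$ that $h^i = g \circ f^i \circ g^{-1}$ (with the convention $f^0 = h^0 = \mathrm{id}$). The base case is immediate; the inductive step is $h^{i+1} = h \circ h^i = (g \circ f \circ g^{-1}) \circ (g \circ f^i \circ g^{-1}) = g \circ f \circ f^i \circ g^{-1} = g \circ f^{i+1} \circ g^{-1}$, the middle cancellation being $g^{-1} \circ g = \mathrm{id}$. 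Consequently $h^i(g(x_0)) = g(f^i(x_0))$ for every $i$, so the orbit $\{g(x_0), h(g(x_0)), \ldots, h^{N-1}(g(x_0))\}$ equals $\{g(x_0), g(f(x_0)), \ldots, g(f^{N-1}(x_0))\} = g(S)$.

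It then remains only to discharge the bookkeeping in the definition of K-compressibility: $g(S)$ still consists of $N$ distinct numbers because $g$, being affine with $a \neq 0$, is a bijection, and $g(x_0) \in g(S)$ holds trivially since $x_0 \in S$. Hence $g(S)$ is K-compressible. I do not expect a genuine obstacle here; the only point needing a word of care is the non-degeneracy assumption $a \neq 0$, which is exactly what makes $g^{-1}$ exist and what preserves the cardinality of the set — a map with $a = 0$ would collapse $S$ to a single point and is excluded from the usual notion of an affine transformation. (If one wishes to be scrupulous about the ground field, the same argument goes through verbatim over $\mathbb{R}$ or $\mathbb{C}$, since nothing beyond field arithmetic is used.)
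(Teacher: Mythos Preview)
Your proof is correct and is in fact cleaner than the paper's own argument. You use the conceptual device of \emph{conjugation}: with $g(x)=ax+b$ invertible, the map $h=g\circ f\circ g^{-1}$ is again a polynomial of degree $\leq K$, and the telescoping identity $h^i=g\circ f^i\circ g^{-1}$ immediately yields $h^i(g(x_0))=g(f^i(x_0))$, so $g(S)$ is the $h$-orbit of $g(x_0)$.

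The paper instead proceeds by a four-case decomposition of the affine map (pure scaling $\gamma=0$; unit translation $\lambda=\gamma=1$; general translation $\lambda=1,\gamma\neq 0$; then the general case), in each step writing out the new coefficients $b_k$ explicitly in terms of the old $a_k$ and solving a triangular linear system. This buys explicit formulas for the conjugated coefficients but at the cost of several pages of case analysis that your one-line conjugation argument subsumes. Your approach also makes transparent \emph{why} the degree cannot increase (composition with degree-one maps preserves degree), whereas in the paper this is implicit in each coefficient computation. The only point the paper treats that you leave aside is the degenerate scaling $\lambda=0$, which you correctly dismiss as excluded by invertibility; the paper likewise calls it ``not interesting.''
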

\begin{proof}
Let $S = \{x_1,x_2,\ldots,x_N\}$ be a K-compressible set. Let the $K^{th}$ degree polynomial $f(x)= \sum_{k=0}^K a_k x^k$ be the compressing function. Without loss of generality let $f(x_i) = x_{i+1}, 1\leq i \leq N$. Let $y_i = \lambda x_i+ \gamma, 1\leq i\leq N, \lambda,\gamma \in \mathbb{R}$. Denote $\tilde{S} = \{y_1,y_2,\ldots,y_N\}$. We now show the existence of $g \in \prod_K$ ($\prod_K$ denotes the vector spaces of polynomials of degree at most K) with the property that $g(y_i) = y_{i+1}$ and hence prove that $\tilde{S}$ is K-compressible. In what follows, let $g(y) = \sum_{k=0}^K b_k y^k$.\\\\
case(i): Let $\gamma = 0$. The only interesting case is $\lambda \not= 0$. Let $b_k = \frac{a_k}{\lambda^{k-1}}$. Then 
\begin{equation*}
g(y_j) = \sum_{k=0}^K \lambda a_k x_j^k = \lambda x_{j+1} = y_{j+1}, 1 \leq j \leq N-1.
\end{equation*}\\\\
case (ii): Let $\gamma = \lambda = 1$. Then for $1 \leq j \leq N-1$,
\begin{eqnarray*}
g(y_j) &=& \sum_{k=0}^K b_k(x_j+1)^k = \sum_{k=0}^K \sum_{l=0}^k b_k \binom{k}{l}x_j^l \nonumber\\
       &=& \sum_{k=0}^K x_j^k \left(\sum_{l=k}^K b_l \binom{l}{k}\right). \nonumber \\
\end{eqnarray*}
Define $a_k = \sum_{l=k}^K b_l \binom{l}{k}, 1\leq k \leq K$, $a_0+1= \sum_{l=0}^K b_l$ and solve the linear system of equations to obtain $b_0,b_1,\ldots,b_K$ as a function of $a_0,a_1,\ldots,a_K$. Hence,
\begin{equation*}
g(y_j) = \sum_{k=0}^K a_k x_j^k + 1 = x_{j+1}+1 = y_{j+1}.
\end{equation*}\\\\
case (iii): Let $\lambda = 1, \gamma \not=0$. Let $z_j = \frac{x_j} {\gamma}$. Then $y_j = \gamma (z_j+1)$. By case(i), $\exists f_1\in \prod_K$ with $f_1(z_j) = z_{j+1}$. Replacing $x_j$ with $z_j$ in case(ii), $\exists f_2 \in \prod_K$with $f_2(z_j+1) = z_{j+1}+1$. Replacing $x_j$ with $z_j+1$ in case (i), $\exists g \in \prod_K$ with 
\begin{equation*}
g(\gamma(z_j+1)= y_j) = \gamma(z_{j+1}+1) = y_{j+1}.
\end{equation*}\\\\
case (iv): For the general case, let $z_j = \lambda x_j$. Then $y_j = z_j + \gamma$. By case (i),$\exists f_1\in \prod_K$ with $f_1(z_j) = z_{j+1}$. Replacing $x_j$ with $z_j$ in case (iii), $\exists g \in \prod_K$ with 
\begin{equation*}
g(z_j+\gamma= y_j) = z_{j+1}+\gamma = y_{j+1}
\end{equation*}
which completes the proof.   
\end{proof}
\section{Equivalent transformations}
\label{EqvTrans}
We now provide a set of equivalent transformations of our problem which are useful in establishing computational complexity results and in providing approaches to solve the problem. The K-compressibility problem can be formally restated as follows. Given a set $S$ of $N$ distinct numbers $S = \{x_1,x_2,\ldots,x_N\}$ and an integer $K<N-2$, does $ \exists \sigma \in S_N$ ($S_N$ denotes the symmetric group) and a $f \in \prod_K$ such that $f(x_{\sigma(i)}) = x_{\sigma(i+1)}, 1 \leq i \leq N-1$. We would like to stress on this key point that if a set $S$ is K-compressible, then elements of $S$ should be producible starting at some point $x_0 \in S$ and by repeated composition of $K^{th}$ polynomial $f$. In other words, $f^{i}(x_0) \in S, 0\leq i \leq N-1$ and $f^i(x_0) \not= f^j(x_0)$ if $i \not = j$.

Immediately we observe similarities between the constraints imposed above and the constraints imposed in traveling salesman problem and the Hamiltonian path problem. If we regard each number as a city (or vertex), the problem then is to find the sequence of cities(numbers) such that each city is visited exactly once and a certain other condition is met. We call this constraint the \emph{Hamiltonian path constraint}. In what follows, we show how these disparate looking problems having similar conditions can be tied together.

From the given set $S$, choose a $y \in S$ and assume that $\exists x_0 \in S$ and $f \in \prod_K$ such that $f^{N-1}(x_0) = y$ and satisfying the Hamiltonian path constraint. The problem then reduces to determining whether for at least one choice (out of possible $N$ choices) of $y$, our assumption holds good. Without loss of generality let $x_N = y$. We then need to determine whether there exist numbers $a_0,a_1,\ldots,a_K$ such that the following matrix equation
\[ 
\left[
\begin{matrix}
1&x_1&x_1^2&\dots&x_1^K\\
1&x_2&x_2^2&\dots&x_2^K\\
\vdots&\vdots&\vdots&\vdots&\vdots\\
1&x_{N-1}&x_{N-1}^2&\dots&x_{N-1}^K
\end{matrix}
\right]
\left[
\begin{matrix}
a_0\\a_1\\\vdots\\a_K
\end{matrix}
\right] = \left[
\begin{matrix}
v_1\\v_2\\\vdots\\v_{N-1}
\end{matrix}
\right]
\]
is satisfied along with the Hamiltonian path constraint. The matrix equation can be succinctly represented as $V \vec{a} = \vec{v}$. Here $v_1,v_2,\ldots,v_{N-1} \in S$. Note that the missing element $v_N$ is the starting element $x_0$. For $K < N-2$, the above set of equations form an over-determined system. The matrix $V$ on the left of the equation is commonly referred to as \emph{Vandermonde} matrix and is well known in polynomial interpolation theory.

Using the least squares solution for $\vec{a}$, namely $\vec{a} = (V^TV)^{-1}V^T\vec{v}$, $\vec{v}$ satisfies 
\begin{equation*}
V(V^TV)^{-1}V^T\vec{v} = \vec{v}, 
\end{equation*}
i.e, $\vec{v}$ is the eigen-vector of $V(V^TV)^{-1}V^T$ with eigen-value $1$.

Let $V = \left[Q_1 Q_2 \right]\left[\begin{matrix}R_1\\0\end{matrix}\right]$ denote the $QR$ decomposition of $V$, where $ Q = [Q_1 Q_2]$ is an orthogonal matrix and $R_1$ is $K+1 \times K+1$ upper-triangular matrix. Notice that $V = Q_1 R_1$ and $Q_1^T Q_1 = I$. Since rank of $V$ is $K+1$, it follows that $R_1$ is full-ranked and hence invertible. Using elementary matrix algebra it can be shown that $V(V^TV)^{-1}V^T = Q_1 Q_1^T$ and hence $\vec{v}$ satisfies $Q_1Q_1^T \vec{v} = \vec{v}$, which can be rewritten as 
\begin{equation}
\label{EqvMatEq}
(Q_1Q_1^T - I)\vec{v} = 0.
\end{equation}

Let $A = Q_1Q_1^T-I$. Append $v_N$ to $\vec{v}$ and add a last column of all 0's to $A$. Recall that each $v_i \in S$ and we need to determine whether a solution exists for the matrix equation $A\vec{v}=0$ satisfying the Hamiltonian path constraint.

\subsection{Transformation to Exact weight perfect matching problem}
\label{Trans2EWPM}
We now provide equivalences between solving the matrix equation and the exact weight perfect matching problem. Let us currently ignore the Hamiltonian path constraint and also the nature of coefficients of the matrix $A$ and consider solving a more general system $A\vec{v}=0$ where $A$ is some known $M \times N$ matrix and each $v_i \in S$ for some known set $S = \{x_1,x_2,\ldots,x_N\}$. Define a vector $\vec{x} \in \mathbb{Q}^N$ by $\vec{x} = [x_1,x_2,\ldots,x_N]^T$. Checking whether solution exist for the general system $A\vec{v} = 0$ is equivalent to verifying whether there exist a permutation matrix $P$ for which $AP\vec{x} = 0$. Though this problem may have already be shown to be NP-complete, for the sake of completion we provide a proof here by a reduction from set-partition.

\begin{theorem}
\label{APXNPComp}
Given a $M \times N$ matrix $A$ and a vector $\vec{x} \in \mathbb{Q}^N$, determining whether a permutation matrix $P$ exist for which $AP\vec{x} = 0$ is NP-complete.
\end{theorem}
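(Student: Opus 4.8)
The plan is to prove NP-completeness in the standard two-step fashion: first show membership in NP, then exhibit a polynomial-time reduction from a known NP-complete problem, for which \textsc{Set-Partition} is the natural choice since the equation $AP\vec{x}=0$ is inherently about splitting the multiset of entries of $\vec{x}$ into balanced groups. Membership in NP is immediate: a permutation matrix $P$ is a polynomial-size certificate, and verifying $AP\vec{x}=0$ is a single matrix–vector multiplication over $\mathbb{Q}$, done in polynomial time. So the content is entirely in the reduction.

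For the reduction, I would start from an instance of \textsc{Set-Partition}: a finite multiset of positive integers $\{s_1,\dots,s_n\}$, and ask whether it can be partitioned into two subsets of equal sum, i.e. whether there is a vector $\vec{\epsilon}\in\{+1,-1\}^n$ with $\sum_i \epsilon_i s_i = 0$. The idea is to encode the sign pattern $\vec\epsilon$ as a choice of permutation. First I would take the set $S$ (equivalently the vector $\vec x$) to consist of the $2n$ numbers $\{+s_1,-s_1,+s_2,-s_2,\dots,+s_n,-s_n\}$, so that $N=2n$. Then I would design $A$ so that the constraint $AP\vec x=0$ forces, for each $i$, exactly one of $\{+s_i,-s_i\}$ to be ``selected'' (placed in a designated slot) and the other to be ``discarded,'' and simultaneously forces the sum of the selected entries to be zero. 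Concretely, $A$ would have $n$ ``pairing rows'' ensuring slot $2i-1$ and slot $2i$ receive $\{+s_i,-s_i\}$ in some order (e.g. a row enforcing $v_{2i-1}+v_{2i}=0$, which is automatically true for any placement that keeps the two partners together, combined with structural rows that prevent cross-pairing — or more simply, have $A$ require $v_{2i-1}^{}+v_{2i}^{}=0$ only after a preprocessing argument that the $\pm s_i$ values are generic enough that $v_{2i-1}+v_{2i}=0$ can \emph{only} be satisfied by the matched pair), plus one ``balance row'' $\sum_{i=1}^{n} v_{2i-1} = 0$ reading off the selected representatives. A cleaner variant: pick $S=\{s_1,-s_1,\dots,s_n,-s_n\}$ and let $A$ be the single row $[\,1\ 0\ 1\ 0\ \cdots\ 1\ 0\,]$ together with $n$ rows of the form $e_{2i-1}+e_{2i}$ acting as $v_{2i-1}+v_{2i}$; since $s_i>0$ all distinct, the pairing rows force $\{v_{2i-1},v_{2i}\}=\{s_i,-s_i\}$, and then the balance row says $\sum_i \epsilon_i s_i=0$ where $\epsilon_i$ records which of $\pm s_i$ landed in the odd slot. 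I would then verify the two directions: a valid partition yields a permutation placing the chosen signs in odd slots, hence $AP\vec x=0$; conversely any $P$ with $AP\vec x=0$ must (by the pairing rows and distinctness of the $s_i$) be partner-preserving, and its balance row gives the partition.

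The main obstacle I anticipate is the ``partner-preserving'' step — ensuring that the only permutations satisfying the pairing constraints are those that keep each $\{+s_i,-s_i\}$ together in its designated pair of slots, rather than, say, $+s_i$ pairing with $-s_j$ when $s_i=s_j$ or when some accidental cancellation $s_i - s_j + s_k - \cdots = 0$ occurs. The fix is a genericity trick: replace each $s_i$ by $s_i$ scaled and shifted so that the multiset $\{\pm s_i\}$ has no nontrivial zero-sum subrelations other than the intended $s_i+(-s_i)=0$ — for instance multiply all $s_i$ by a large constant and add distinct small perturbations, or work with $s_i' = s_i M + 2^i$ for a suitably large $M$, and adjust the balance row accordingly (this is a standard padding argument and keeps everything polynomial in the input size, since the numbers grow only polynomially in bit-length). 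Once that is in place the equivalence of instances is a routine check, and combined with NP membership this establishes NP-completeness.
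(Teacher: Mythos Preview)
Your proposal is essentially correct, but it takes a noticeably more elaborate route than the paper. The paper's reduction is a one-liner: it uses the (still NP-complete) \emph{fixed-cardinality} variant of \textsc{Set-Partition} (partition $S$ into parts of prescribed sizes $T$ and $N-T$ with equal sums), takes $\vec{x}$ to be the given numbers themselves, and lets $A$ be the single $1\times N$ row $[\underbrace{1,\ldots,1}_{T},\underbrace{-1,\ldots,-1}_{N-T}]$. Then $AP\vec{x}=0$ says precisely that the first $T$ permuted entries sum to the last $N-T$. No doubling of the set, no pairing rows, no genericity argument.

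Your construction---duplicate each $s_i$ as $\pm s_i$, add $n$ pairing rows $v_{2i-1}+v_{2i}=0$, and one balance row over the odd slots---also works, and has the mild advantage of not needing the fixed-cardinality version of \textsc{Partition}. But your worry about ``accidental multi-term cancellation $s_i - s_j + s_k - \cdots = 0$'' is misplaced: each pairing row involves only two slots, so with distinct positive $s_i$ the equation $v_{2i-1}+v_{2i}=0$ already forces that slot-pair to be $\{+s_j,-s_j\}$ for \emph{some} $j$ (not necessarily $j=i$, but that is harmless for the balance row). The genericity padding you propose at the end is therefore unnecessary once you restrict to distinct inputs, which is already NP-complete. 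In short: your argument is sound, but the paper achieves the same with $M=1$ and no auxiliary structure.
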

\begin{proof}
This problem can easily be checked to be in $NP$, as given the permutation matrix $P$, verifying whether $AP\vec{x} = 0$ can be done in polynomial time. 

For the NP-hardness proof, consider a special case of the set-partition problem namely, \emph{``Given a set $S$ of $N$ numbers and an integer $T$, does there exist a partition of $S$ into two subsets $G$ and $H$ of cardinality $T$ and $N-T$ respectively, such that the sum of elements in $G$ equals the sum of the elements in $H$?''} Even with this additional restriction on the size of the subset, the set-partition problem is NP-complete as one can add $N$ zeros to a given set-partition problem to get a set of $2N$ elements and check whether this set can be partitioned into two subsets of cardinality $N$ each such that the sum of the elements in one partition equals the sum in the other. We now give a easy reduction from this set-partition problem. 

Define a matrix of size $1 \times N$ where the first $T$ are 1 and the remaining $N-T$ entries are $-1$. As above define $\vec{x} = [x_1,x_2,\ldots,x_N]^T$ where $x_1,x_2,\ldots,x_N$ are elements of the given set $S$. Now if $G = \{g_1,g_2,\ldots,g_T\}$ and $H = \{h_1,h_2,\ldots,h_{N-T}\}$ solves the set-partition problem, then consider the permutation $P$ for which $P\vec{x} = [g_1,g_2,\ldots,g_T,h_1,h_2,\ldots,h_{N-T}]^T$ and hence by construction of $A$, we see that $AP\vec{x} = 0$. The other direction can be proved by just reversing the arguments. 
\end{proof} 
We now reduce the above problem to the exact weight perfect matching problem by defining a weighted complete bipartite graph for each of the $M$ constraints and then later combine these graphs into an equivalently weighted single complete bipartite graph. Imposing the Hamiltonian constraint amounts to enforcing restrictions on the perfect matching which will be done in the end.

Consider a restatement of the above problem. Given a $M \times N$ matrix $A$ and $\vec{x} \in \mathbb{Q}^N$, we need to find whether $\exists \pi \in S_N$ such that $\sum_{j=1}^N a_{i,j}x_{\pi(j)} = 0,1 \leq i \leq M$. Here $a_{i,j}$ denotes the $(i,j)^{th}$ entry of the matrix $A$. Let $m_i = \max \{0,-a_{i,1},-a_{i,2},\ldots,-a_{i,N}\}$ and $n= \max\{0,-x_1,-x_2,\ldots,-x_N\}$. Let $s_i = \sum_{j=1}^N a_{i,j}$ and $s = \sum_{j=1}^N x_j$. Let $b_{i,j} = a_{i,j}+m_i$ and $y_j = x_j+n$. Define $\alpha_i = ns_i+m_is+Nnm_i$. By definition we have $b_{i,j},y_j \geq 0$. The problem is then equivalent to finding a permutation $\pi \in S_N$ such that $\sum_{j=1}^N b_{i,\pi(j)}y_{\pi(j)} = \alpha_i$ for $1 \leq i \leq M$. With the above set up in place, each constraint can be turned into an exact weight perfect matching problem in a complete bipartite graph as follows.

Define a complete bipartite graph $G=(V,E)$, $V = B \cup Y$ where $|B| = |Y| = N$. Index the vertices of $B$ and $Y$ as $\tilde{b_1},\tilde{b_2},\ldots,\tilde{b_N}$ and $\tilde{y_1},\tilde{y_2},\ldots,\tilde{y_N}$ respectively. Let $e_{u,v} \in E$ denote the edge connecting the vertices $\tilde{b_u}$ and $\tilde{y_v}$. For each one of the above $M$ constraints, define a cost $c_i(e_{u,v})$ for the edge $e_{u,v}$ as $c_i(e_{u,v}) = b_{i,u}y_v, 1 \leq i \leq M$. The above problem can be reformulated as \emph{``Given this graph $G$ with costs defined as above, does there exist a perfect matching $PM \subset B \times Y$ such that the cost of the $PM$, $(C_i(PM))$ equals $\alpha_i$ for all $1 \leq i \leq M$?''}

We now transform these $M$ exact weight perfect matching problem into one problem by following the lines of \cite{Deineko06}, where the authors reduce the min-max assignment problem \cite{Aissi05} to exact weight perfect matching problem \cite{Deineko06}. Let $C_{max} = \max_{i,u,v} c_i(e_{u,v})$. Consider a complete bipartite graph $\tilde{G}$ with the same vertex set as graph $G$. Define the cost $w(e_{u,v})$ for the edge $e_{u,v}$ in $\tilde{G}$ as 
\begin{equation}
\label{edgeWeight}
w(e_{u,v}) = \sum_{i=1}^M b_{i,u}y_v (NC_{max}+1)^{i-1} = h_u y_v, 
\end{equation}
where
\begin{equation}
\label{summedWeights}
h_u =\sum_{i=1}^M b_{i,u}(NC_{max}+1)^{i-1}.
\end{equation}
Define 
\begin{equation}
\label{alphaDef}
\alpha = \sum_{i=1}^M \alpha_i (NC_{max}+1)^{i-1}. 
\end{equation}
We now show the following.
\begin{theorem}
G has a perfect matching $PM$ with $C_i(PM) = \alpha_i, 1 \leq i \leq M$ if and only if, $\tilde{G}$ has a perfect matching $PM$ with $C(PM) = \alpha$.
\end{theorem}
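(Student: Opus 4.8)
The plan is to exploit the fact that the weights of $\tilde{G}$ are assembled from the $M$ cost functions of $G$ via a positional (base-$(NC_{max}+1)$) encoding, in the spirit of the min-max reduction of \cite{Deineko06}, so that the single equation $C(PM)=\alpha$ over $\tilde{G}$ simultaneously encodes the $M$ equations $C_i(PM)=\alpha_i$ over $G$. I would start from the standard bijection between perfect matchings of the complete bipartite graph on $B\cup Y$ and permutations $\pi\in S_N$, where $\pi$ matches $\tilde{b}_u$ to $\tilde{y}_{\pi(u)}$. Under this bijection $C_i(PM)=\sum_{u=1}^N c_i(e_{u,\pi(u)})=\sum_{u=1}^N b_{i,u}y_{\pi(u)}$, while substituting \eqref{edgeWeight} and \eqref{summedWeights} gives $C(PM)=\sum_{u=1}^N h_u y_{\pi(u)}=\sum_{i=1}^M (NC_{max}+1)^{i-1}\sum_{u=1}^N b_{i,u}y_{\pi(u)}=\sum_{i=1}^M (NC_{max}+1)^{i-1}C_i(PM)$; and by \eqref{alphaDef} the target $\alpha=\sum_{i=1}^M (NC_{max}+1)^{i-1}\alpha_i$ has exactly the same shape.

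The forward direction is then immediate: a perfect matching of $G$ with $C_i(PM)=\alpha_i$ for all $i$, regarded as a matching of $\tilde{G}$, satisfies $C(PM)=\sum_i (NC_{max}+1)^{i-1}C_i(PM)=\sum_i (NC_{max}+1)^{i-1}\alpha_i=\alpha$.

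For the converse I would take a perfect matching $PM$ of $\tilde{G}$ with $C(PM)=\alpha$, read off its permutation $\pi$, and obtain $\sum_{i=1}^M (NC_{max}+1)^{i-1}\big(C_i(PM)-\alpha_i\big)=0$. The key point is that the coefficients are legitimate digits in base $NC_{max}+1$: since each edge cost satisfies $0\le c_i(e_{u,v})=b_{i,u}y_v\le C_{max}$ (because $b_{i,u},y_v\ge 0$ and $C_{max}$ is the maximum of these products) and a perfect matching consists of exactly $N$ edges, we have $0\le C_i(PM)\le NC_{max}<NC_{max}+1$; and the same bound holds for every $\alpha_i$, since by the equivalence established just before the theorem each $\alpha_i$ equals $\sum_j b_{i,j}y_{\pi^{*}(j)}$ for a witnessing permutation $\pi^{*}$ (equivalently, one may restrict to instances with $0\le\alpha_i\le NC_{max}$, as otherwise both sides of the asserted equivalence are vacuously false). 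Once all coefficients lie in $\{0,1,\dots,NC_{max}\}$, reducing the identity modulo $NC_{max}+1$ forces $C_1(PM)=\alpha_1$; dividing through by $NC_{max}+1$ and repeating (a finite induction on $i$) yields $C_i(PM)=\alpha_i$ for every $i$, so $PM$ witnesses the left-hand side.

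The bijection and the algebraic rearrangement of $C(PM)$ are routine. The step I expect to be the real crux — and the reason the base is taken to be $NC_{max}+1$ rather than anything smaller — is the bound confining each $C_i(PM)$ (and each $\alpha_i$) strictly below the base; this is exactly what makes the base-$(NC_{max}+1)$ representation of $\alpha$ unique and allows the single equation to be decomposed back into the $M$ original ones.
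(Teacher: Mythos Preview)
Your argument is correct and follows essentially the same route as the paper: identify matchings with permutations, expand $C(PM)$ as $\sum_i (NC_{max}+1)^{i-1}C_i(PM)$, get the forward direction by substitution, and for the converse use $0\le C_i(PM)\le NC_{max}$ together with uniqueness of base-$(NC_{max}+1)$ digits. You are in fact slightly more careful than the paper, which silently assumes $0\le\alpha_i\le NC_{max}$; your parenthetical observation that one may restrict to such instances (since otherwise both sides of the biconditional fail) is exactly the right way to close that small gap.
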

\begin{proof}
For both directions of the proof, let $e_{1,\pi(1)},e_{2,\pi(2)},\ldots,e_{N,\pi(N)}$ for some $\pi \in S_N$ constitute the edges of the perfect matching.\\\\
case(i): Let $G$ have a perfect matching $PM$ with $C_i(PM) = \alpha_i$. We have $\sum_{j=1}^N c_i(e_{j,\pi(j)}) = \alpha_i, 1 \leq i \leq M$. Then, 
\begin{equation*}
\sum_{j=1}^Nw(e_{j,\pi(j)}) = \sum_{j=1}^N\sum_{i=1}^Mc_i(e_{j,\pi(j)}) (NC_{max}+1)^{i-1} = \sum_{i=1}^M \alpha_i(NC_{max}+1)^{i-1} = \alpha  
\end{equation*}\\\\
case(ii): Let $\tilde{G}$ have a perfect matching $PM$ with $C(PM) = \alpha$, i.e $\sum_{j=1}^N w(e_{j,\pi(j)}) = \alpha$. It then follows,
\begin{equation*}
\alpha = \sum_{j=1}^N\sum_{i=1}^Mc_i(e_{j,\pi(j)}) (NC_{max}+1)^{i-1} = \sum_{i=1}^N (NC_{max}+1)^{i-1}\left[\sum_{j=1}^Nc_i(e_{j,\pi(j)})\right]
\end{equation*}
Recall the definition of $\alpha$ from equation(\ref{alphaDef}). Also notice the inequality 
\begin{equation*}
\sum_{j=1}^N c_i(e_{j,\pi(j)}) \leq NC_{max}, \forall i, 
\end{equation*}
as $C_{max} \geq c_i(e_{u,v}) $ and $c_i(e_{u,v})\geq 0$ by definition. By representing numbers in base $(NC_{max}+1)$, it follows that $\sum_{j=1}^N c_i(e_{j,\pi(j)}) = \alpha_i = C_i(PM), \forall i$ which completes the proof.   
\end{proof}
We would like to emphasize that $\tilde{G}$ is \emph{polynomial time} reducible from $G$ and $\alpha$ can be represented in $O(M \log C_{max})$ bits. Thus the following result can be deduced from the above reduction and theorem \ref{APXNPComp}, namely
\begin{corollary}
\label{ExactMatchNPComp}
Exact weight perfect matching is NP-complete
\end{corollary}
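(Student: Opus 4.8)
The plan is to observe that this Corollary is really a bookkeeping statement: it follows by composing Theorem \ref{APXNPComp} with the reduction just carried out, so the work is to package the pieces into a single polynomial-time many-one reduction and to verify that every object produced along the way has bit-size polynomial in the input. Membership in NP is the easy half: an instance of exact weight perfect matching is a weighted (bipartite) graph together with a target value $\alpha$, and given a candidate perfect matching as certificate, one simply sums the weights of its $N$ edges and compares with $\alpha$, which is a polynomial-time check.

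For NP-hardness I would reduce from the problem of Theorem \ref{APXNPComp}, namely deciding, for a given $M\times N$ rational matrix $A$ and vector $\vec x\in\mathbb{Q}^N$, whether some permutation matrix $P$ satisfies $AP\vec x=0$. Given such an instance, perform exactly the transformations described above: form the shifted nonnegative quantities $b_{i,j}=a_{i,j}+m_i$, $y_j=x_j+n$ and the targets $\alpha_i=ns_i+m_is+Nnm_i$, so that the existence of $\pi\in S_N$ with $\sum_{j} a_{i,j}x_{\pi(j)}=0$ for all $i$ is equivalent to the existence of $\pi$ with $\sum_{j} b_{i,\pi(j)}y_{\pi(j)}=\alpha_i$ for all $i$; build the complete bipartite graph $G$ with the $M$ cost functions $c_i(e_{u,v})=b_{i,u}y_v$, so that this is exactly the statement that $G$ has a perfect matching $PM$ with $C_i(PM)=\alpha_i$ for all $i$; and finally collapse the $M$ costs into the single weight $w(e_{u,v})=\sum_{i} b_{i,u}y_v(NC_{max}+1)^{i-1}$ with target $\alpha=\sum_{i}\alpha_i(NC_{max}+1)^{i-1}$ as in equations (\ref{edgeWeight}) and (\ref{alphaDef}). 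By the theorem immediately preceding this corollary, $G$ has a perfect matching meeting all $M$ targets if and only if $\tilde G$ has a perfect matching of weight exactly $\alpha$, and the pair $(\tilde G,\alpha)$ is a legitimate exact weight perfect matching instance. Chaining the equivalences, $(A,\vec x)$ is a yes-instance of the Theorem \ref{APXNPComp} problem if and only if $(\tilde G,\alpha)$ is a yes-instance of exact weight perfect matching.

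The only real obstacle is confirming that this composed reduction runs in polynomial time, which amounts to a size estimate. The shifts add at most $\max_j|a_{i,j}|$ and $\max_j|x_j|$, so each $b_{i,u}$, each $y_v$, each $\alpha_i$, and hence $C_{max}=\max_{i,u,v}b_{i,u}y_v$, have bit-length polynomial in the bit-length of $(A,\vec x)$; consequently each base factor $(NC_{max}+1)^{i-1}$ with $i\le M$ has $O(M\log(NC_{max}))$ bits, and so do the summed weights $h_u$ of (\ref{summedWeights}) and the target $\alpha$. Therefore $\tilde G$ and $\alpha$ are computable from $(A,\vec x)$ in polynomial time, and together with membership in NP this establishes that exact weight perfect matching is NP-complete. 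One may also note that the reduction already lands in the bipartite, nonnegative-integer-weight special case, so the hardness is robust to the usual restrictions placed on the problem.
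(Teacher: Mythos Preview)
Your proposal is correct and follows essentially the same approach as the paper: the corollary is deduced by composing Theorem~\ref{APXNPComp} with the polynomial-time reduction to a single exact weight perfect matching instance $(\tilde G,\alpha)$, together with the observation that $\alpha$ has $O(M\log C_{max})$ bits. Your write-up simply makes explicit the NP-membership and size-bound checks that the paper summarizes in one sentence.
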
 
which is a known result \cite{Aissi05,Deineko06}.

\subsection{Transformation to Multi-Criteria Traveling salesman problem}
\label{Trans2MultiTSP}
Barring the Hamiltonian path constraint, we have shown how the K-compressibility problem can be reduced in polynomial time to the exact weight perfect matching problem by a series of transformations from the equation (\ref{EqvMatEq}), as illustrated in section (\ref{Trans2EWPM}). Enforcing the Hamiltonian path constraint reduces it to the following permutation problem.

Define a set $Z$ consisting of $N$ tuples $Z = \left\{\left(\begin{matrix}h_1\\y_1\end{matrix}\right),\left(\begin{matrix}h_2\\y_2\end{matrix}\right),\ldots,\left(\begin{matrix}h_N\\y_N\end{matrix}\right)\right\}$ where $h_i's$ are defined according to equation(\ref{summedWeights}) and $y_i = x_i + n$ as defined above. Recall that $x_N$ is chosen to be the element for which we have assumed the existence of $x_0 \in S$ and $f \in \prod_K$ such that $f^{N-1}(x_0) = x_N$ with the property that $f^{i}(x_0) \in S, 0 \leq i \leq N-1$. In other words $x_N$ is assumed to be the last element of $S$ in the orbit of $f$ starting from $x_0$.
 
Verifying whether the given set $S$ is K-compressible with the $x_N$ being the last element is equivalent to checking the existence of a permutation $\pi \in S_N$ such that
\begin{equation}
\label{ExactCostPerm}
\sum_{u=1}^N h_{\pi(u)}y_{\pi(u+1)} = \alpha
\end{equation}
where $\alpha$ is defined as per equation(\ref{alphaDef}) and $\pi(N+1)$ refers to $\pi(1)$. This problem of finding the permutation $\pi$ can be transformed to a multi-criteria traveling salesman problem \cite{Blaser08,Manthey09a,Manthey09b}, as illustrated below.

Define a weighted complete directed graph $G' = (V',E')$ consisting of $N$ vertices with no self-loops. Let us label the vertices in $G'$ as $x_1,x_2,\ldots,x_N$ corresponding to the elements of the given set $S$ for which we need to check whether it is K-compressible. Let $e'_{u,v}$ denote the directed edge from vertex $x_u$ to $x_v$. Define the weight on edge $e'_{u,v}$ to be $w(e'_{u,v}) = h_u y_v$. The problem then reduces to determining whether there exist a tour in $G'$ such that each vertex is visited exactly once and the cost of the tour being \emph{exactly} equal to $\alpha$. De$\breve{i}$neko and Woeginger \cite{Deineko06} have shown polynomial reduction of the exact weight perfect matching problem to the min-max assignment problem. The same reduction technique can be used to transform the exact cost tour problem to a multi-criteria traveling salesman problem \cite{Manthey09b}, solutions to which can then be availed to get fast solutions to our problem. To decide whether the given set $S$ is K-compressible, we just need to repeat this process over all the $N$ possible choices for $x_N$.

\section{Computational complexity of K-compressibility}
\label{ComputationalComplexity}
The previous section provided approaches to solve the K-compressibility problem by reducing it to the multi-criteria traveling salesman problem \cite{Manthey09b}. But the above reduction doesn't throw any light on the inherent complexity of our problem. The problem is indeed in $NP$ as the certificate can be the actual sequence in which the elements of $S$ occurs in the orbit. Once the sequence is known, verifying whether a $K^{th}$ polynomial $f$ can produce elements of $S$ in that sequence can easily be done in polynomial time as discussed under section(\ref{introduction}). But the following question still remains unanswered,\emph{''Is K-compressibility NP-hard?''} 

Reconsider the above set of equivalent transformations discussed under section (\ref{EqvTrans}) where we reduced our problem to the multi-criteria traveling salesman problem. The multi-criteria traveling salesman problem is a well known NP-complete problem for an arbitrary set of weights on the edges \cite{Manthey09b}. But if the edge weights are defined according to the transformations described in section (\ref{EqvTrans})--which are some complicated functions (involving $QR$ decomposition of the Vandermonde matrix) of the set elements $x_1,x_2,\ldots,x_N$--is the resultant traveling salesman problem still NP-complete? Currently, we do not have an affirmative answer to this question. But we now show that even for the simplest case where the edge weight $w(e'_{u,v})$ for the edge $e'_{u,v} \in E'$ connecting the vertices $x_u$ and $x_v$ in the resultant graph $G'$--defined in section(\ref{EqvTrans})--is $w(e'_{u,v}) = y_uy_v$ where we have replaced $h_u$ (defined in equation (\ref{summedWeights})) with $y_u$, the multi-criteria traveling salesman problem or equivalently exact cost tour traveling salesman problem is NP-complete.

With the weights defined as above, namely $w(e'_{u,v}) = y_uy_v$, the exact cost tour traveling salesman problem is equivalent to determining whether $\exists \pi \in S_N$ such that $\sum_{i=1}^{N} y_{\pi(i)} y_{\pi(i+1)} = \alpha$ for some fixed cost $\alpha$ with the understanding that $\pi(N+1) = \pi(1)$. We now show the problem to be NP-complete by a reduction from bounded-knapsack problem.

\begin{theorem}
Given a set $S = {y_1,y_2,\ldots,y_N}$ of $N$ elements and a cost $\alpha$, determining whether $\exists \pi \in S_N$ such that $\sum_{i=1}^{N} y_{\pi(i)} y_{\pi(i+1)} = \alpha$ is NP-complete.
\end{theorem}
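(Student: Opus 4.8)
The plan is to show membership in $NP$ (which is immediate) and then $NP$-hardness by a reduction from bounded knapsack, engineering the set so that a Hamiltonian cycle is essentially forced to be a cyclic arrangement of ``separator'' vertices interleaved with monochromatic runs of ``item'' vertices, which turns the exact-cost condition into a knapsack equation.

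For membership: a permutation $\pi$ is a polynomial-size certificate, and the single number $\sum_{i=1}^{N} y_{\pi(i)}y_{\pi(i+1)}$ can be computed and compared with $\alpha$ in polynomial time, exactly as in the general $NP$-membership argument already noted for K-compressibility.

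For hardness, start from a bounded knapsack instance with items $i=1,\dots,m$ of integer values $v_i$, multiplicity bounds $b_i$ and target $t$ (after the usual binary-counting preprocessing we may assume $\sum_i\log b_i$ is polynomial). For each item $i$ introduce $b_i+1$ vertices all carrying a common value $z_i$, and introduce a pool of ``separator'' vertices all carrying one enormous common value $L$, with the pool size chosen to be $\sum_i(b_i+1)$. In any Hamiltonian cycle the separators cut the cycle into gaps, and inside a gap equal values form runs; let $c_i\in\{0,1,\dots,b_i\}$ be the number of run-internal $z_i$--$z_i$ edges, so that the copies of $z_i$ occupy exactly $b_i+1-c_i$ runs. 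A direct computation — completing the square on the contributions $c_i z_i^2$ (run-internal edges), $-2Lc_i z_i$ (there are $2(b_i+1-c_i)$ run-to-separator edges), and the $L^2$ edges between consecutive separators (whose count equals $\sum_i c_i$ with the above pool size) — shows that for every cycle in which each gap is monochromatic the total cost is
\[
\mathrm{const} \;+\; \sum_{i=1}^{m} c_i\,(L-z_i)^2 ,
\]
with $\mathrm{const}$ independent of the cycle. Thus monochromatic exact-cost-$\alpha$ tours are in bijection with solutions of a bounded knapsack instance with values $V_i:=(L-z_i)^2$, bounds $b_i$, and target $\alpha-\mathrm{const}$.

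Two points remain. First, I must realize an arbitrary hard bounded knapsack instance with values of the special shape $(L-z_i)^2$ for a common $L$ and pairwise distinct $z_i$; the plan is a preliminary reduction that replaces each $v_i$ by a nearby perfect square $(R+v_i)^2$ and adds a few slack items of perfect-square value (such as $(R)^2$ and $1$) to absorb the spurious ``$\sum c_i v_i^2$'' and ``$\sum c_i$'' terms this creates, with $R$ and a global radix chosen so that all contributions occupy disjoint digit blocks; one then sets $z_i=L-(R+v_i)$ with $L$ far larger than everything else. Second — the main obstacle — I need the structural lemma: every Hamiltonian cycle of cost exactly $\alpha$ is of the monochromatic form above. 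The plan is to read the cost of an arbitrary cycle as a low-degree polynomial in the huge parameter $L$: an unstructured cycle must have a gap in which two copies $z_i\neq z_j$ are adjacent, producing a cross term $z_i z_j$, and by choosing the offsets $L-z_i$ (or an auxiliary index set used to place the $z_i$) to form a Sidon set — all pairwise sums distinct — this cross term falls in a digit block where $\alpha$ is zero and no legitimate edge contributes, a contradiction. A Sidon set of $m$ integers bounded by $O(m^2)$ is polynomial-time constructible, so all numbers stay of polynomial bit-length and the reduction is polynomial.

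The hard part will be exactly this structural lemma together with the digit bookkeeping it rests on: one must rule out not only gaps mixing two item values but also unanticipated clusterings of separators or runs accidentally summing to $\alpha$, and then check both directions of the tour/knapsack correspondence. The perfect-square re-encoding in the first point is comparatively routine; it is the TSP-specific combinatorics, pushed through the base-$L$ expansion of the cost, that carries the argument.
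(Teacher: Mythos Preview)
Your route is quite different from the paper's and vastly more complicated, and the structural lemma as you state it has a real gap. You want a single cross edge $z_iz_j$ to leave a fingerprint in a digit of the cost that no monochromatic tour can produce. But with $z_i=L-w_i$ one checks that the $L^2$ and $L^1$ coefficients of the total cost are \emph{the same for every Hamiltonian cycle} (each edge contributes $L^2$, and the $L$-coefficient is $-2\sum_v w_v$ by a handshake count), so everything lives in the constant term, which is a sum of \emph{products} $w_iw_j$ over the edges. An additive Sidon condition on the $w_i$ controls pairwise sums, not products, and does nothing to prevent several cross-products from summing to a legitimate value of the form $\sum_i c_i w_i^2$. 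What would actually work is something like $w_i=B^{s_i}$ with the \emph{exponents} $s_i$ forming a Sidon set and $B$ exceeding the total edge count, so that each product $w_iw_j=B^{s_i+s_j}$ owns its own digit; that is plausible but is not what you wrote, and layering the perfect-square re-encoding with slack items on top of it is a second nontrivial construction you have only sketched.

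The paper's reduction avoids all of this by turning your picture upside down: instead of huge separators $L$ and small items, it uses \emph{tiny} separators --- many copies of $0$ and of $1/p$ with $p=2N+1$ --- and scales the item values up to $z_i=\alpha y_i>\alpha$. Then $z_iz_j>\alpha^2>\alpha^2/p$ already exceeds the target, so no two item vertices can be adjacent in any feasible tour: the structural lemma is free. Each $z_i$ therefore sits between two neighbours drawn from $\{0,1/p\}$, and the number of $1/p$-neighbours, which is $0$, $1$, or $2$, is precisely the multiplicity $\gamma_i\in\{0,1,2\}$ in a subset-sum-with-doubles instance. The only residual contribution comes from adjacent $1/p$'s and is at most $2N/p^2$; an integrality argument with $p=2N+1$ forces it to vanish. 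One scaling and one modulus replace your Sidon sets, base-$L$ digit bookkeeping, and square re-encoding.
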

\begin{proof}
The problem can easily be seen to be in $NP$, as once the permutation $\pi$ is known, verification can be done in polynomial time. For the NP-hardness proof we consider reduction from bounded-knapsack problem. 

Bounded-knapsack problem is defined as follows. \emph{``Given a set of items, each with utmost $b_i$ copies, and a weight and profit for each of it, determine the number of each item to include in a collection so that the total weight is less than a given weight limit and the total profit exceeds the given minimum profit?''} Let $W = \{w_1,w_2,\ldots,w_N\}$ and $P = \{p_1,p_2,\ldots,p_N\}$ denote the weights and profits respectively. Let $\alpha_W$ and $\alpha_P$ be the weight limit and profit value respectively. The bounded-knapsack problem is to determine whether the following inequalities
\begin{equation*}
\sum_{i=1}^N p_i \gamma_i \geq \alpha_P, \hspace{0.2in} \sum_{i=1}^N w_i \gamma_i \leq \alpha_W
\end{equation*}
can be satisfied with $\gamma_i \in \{0,1,\ldots,b_i\}$. 

Let us consider an special case of the bounded-knapsack problem where $p_i = w_i = y_i$ and $b_i = 2, \forall i$. Having the weights and profits to be equal, reduces it to the subset-sum problem with two copies for each item. The problem is then meaningful only when $\alpha_P = \alpha_W = \alpha$. It reduces to checking whether $\exists \gamma_i \in \{0,1,2\}$ such that $\sum_{i=1}^N y_i \gamma_i = \alpha$. We now show reduction from this NP-complete problem.

Without loss of generality assume that $y_i, \alpha \in \mathbb{Z^+}$ and $y_i > 1$. Define $z_i = \alpha y_i$. The problem is equivalent to verifying whether $\sum_{i=1}^N z_i \gamma_i = \alpha^2 $ with $\gamma_i \in \{0,1,2\}$. Let $S' = \{z_1,z_2,\ldots,z_N\}$. Let $ p = 2N+1$. Append the set $S'$ with $\frac{1}{p}$ $N+1$ times and $2N+2$ 0's and call the appended set $S''$. This set $S''$ and the cost $\frac{\alpha^2}{p}$ will be the input to our problem.\\\\
case(i): Assume that the bounded-knapsack problem has a solution ,i.e, $\exists \gamma_i \in \{0,1,2\}$ such that $\sum_{i=1}^N z_i \gamma_i = \alpha^2$. Let $T_0$, $T_1$ and $T_2$ form a partition of $S'$ such that $z_i \in T_n$ if and only if $\gamma_i = n$ in the solution. Let $N_n$ denote the cardinality of the subset $T_n$ whose elements are denoted by $T_n = \{t_{n,1}, t_{n,2},\ldots,t_{n,N_n}\}$. Consider the permutation $\pi$ which places the elements of $S''$ in the order 
\begin{eqnarray*}
&&t_{0,1},0,t_{0,2},0,\ldots,t_{0,N_0},0,t_{1,1},\frac{1}{p},t_{1,2},0,t_{1,3},\frac{1}{p},t_{1,4},\ldots,0,t_{1,N_1-1},\frac{1}{p},t_{1,N_1},0,\\
&&\frac{1}{p},t_{2,1},\frac{1}{p},t_{2,2},\frac{1}{p},t_{2,3},\ldots,\frac{1}{p},t_{2,N_2},\frac{1}{p},0,\frac{1}{p},0,\ldots,0
\end{eqnarray*}
i.e, every $t_{0,i} \in T_0$ is sandwiched between two $0's$, each $t_{1,i} \in T_1$ is \emph{either} succeeded or preceded by $\frac{1}{p}$ and every $t_{2,i} \in T_2$ is sandwiched between two $\frac{1}{p}$. By construction, this permutation $\pi$ is the solution to our problem.\\\\
case(ii): Let the permutation $\pi$ be the solution to our problem.  Since $z_i > \alpha$, no two elements of $S'$ can be juxtaposed in the ordering induced by $\pi$. Let
\[
\gamma_i = \left\{
\begin{array}{ll}
0&\mbox{if $z_i$ is sandwiched between two 0's in the order};\\
1&\mbox{if $z_i$ is \emph{either} preceded or succeded by $\frac{1}{p}$};\\
2&\mbox{if $z_i$ is sandwiched between two $\frac{1}{p}$ in the order}.
\end{array}
\right.
\]
We have,
\begin{equation*}
\frac{\sum_{i=1}^N z_i \gamma_i}{p} + \frac{\beta}{p^2} = \frac{\alpha^2}{p}
\end{equation*}
where $0 \leq \beta \leq 2N$ and $\frac{\beta}{p^2}$ is the sum obtained from juxtaposing $\frac{1}{p}$. Rearranging terms we get $\beta = (\alpha^2 - \sum_{i=1}^N z_i \gamma_i) p$. Since $\alpha, z_i \in \mathbb{Z}$ and $p=2N+1$, $\beta$ should equal $0$.
Thus $\sum_{i=1}^N z_i \gamma_i = \alpha^2$ which completes the proof.
\end{proof}
From the above theorem it easily follows that, for the general case where $h_u$ is arbitrary--instead of equation (\ref{summedWeights})--the exact cost tour traveling salesman problem with weights defined as above is NP-complete. This provides a strong evidence that the K-compressibility problem may not have a polynomial time solution unless P=NP.

\subsection{Difficulty in NP-completeness proof}
An answer to the following question might throw some light on the difficulty of showing the K-compressibility problem to be NP-hard. \emph{``For every $K \geq 2$, does $\exists$ an integer $M(K)$, such that $\forall N \geq M(K)$, any K-compressible set $S$ consisting of $N$ distinct rationals is uniquely K-compressible, i.e there exist utmost one compressing function $f \in \prod_K, f \notin \prod_1$?''}. In other words, is it true that, for a given $K \geq 2$, for a sufficiently large N, any set $S = \{x_1,\cdots,x_N\}$ with $x_i \in \mathbb{Q}$ that is K-compressible and not 1-compressible, is \emph{uniquely} K-compressible? It is worth emphasizing that the result is not true for $K=1$. Since the inverse of a linear function is linear, any set $S$ compressible by a linear function $f(x) = ax+b$ is also compressible by its inverse $g(x) = f^{-1}(x) = \frac{x-b}{a}$, with $g$ producing the elements in the reverse order as given by $f$. The question is definitely interesting for $K \geq 2$ as it may provide insight on the uniqueness of the orbits of polynomial functions. 

Our hunch is that, the aforementioned claim, namely the existence of the bound $M(K)$ for all $K \geq 2$, is indeed true. The verity of our claim has farfetched implications on the K-compressible problem being NP-complete. At this stage, it is worth recalling the definition of the complexity class \emph{UP} which stands for \emph{``Unambiguous Non-deterministic Polynomial-time''}. The definition of this class is as follows: A language $L$ belongs to $UP$ if there exists a two input polynomial time algorithm $\mathbb{A}$ and a constant $\alpha$ such that
\begin{itemize}
\item if x in L , then there exists a \emph{unique} certificate $y$ with $|y| = O(|x|^{\alpha})$ such that $\mathbb{A}(x,y) = 1$.
\item if x isn't in L, there is \emph{no} certificate $y$ with $|y| = O(|x|^{\alpha})$ such that $\mathbb{A}(x,y) = 1$.
\end{itemize}
The algorithm $\mathbb{A}$ verifies $L$ in polynomial time. The crucial aspect of the class $UP$ is the uniqueness of the certificate $y$ if one exists. If the bound $M(K)$ exists, then for all sets $S$ with cardinality $N \geq M(K)$, either the set is not K-compressible or it is \emph{uniquely} K-compressible. Then the K-compressible problem with $N \geq M(K)$ may actually belong to class $UP$. The question of whether $UP =  NP$, is still an open problem in the theoretical computer science community. 

For the trivial case where $K=1$, we would like to prove the following result.

\begin{theorem}
Any 1-compressible set $S$, with $|S| \geq 5$, is uniquely compressible modulo the function inverse.
\end{theorem}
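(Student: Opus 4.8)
The plan is to reduce the statement to a rigidity property of the \emph{orbit orderings} of $S$, and then dispatch the resulting cases using the affine normalization of Lemma~\ref{affineTrans}.

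\emph{Step 1 (reduction to orderings).} First I would record two elementary observations. (a) From the construction in the proof of Lemma~\ref{affineTrans}, every invertible affine map $\phi$ induces a bijection $h\mapsto\phi\circ h\circ\phi^{-1}$ between the compressing functions of $S$ and those of $\phi(S)$, and this bijection commutes with $h\mapsto h^{-1}$; hence ``uniquely compressible modulo the function inverse'' is an affine invariant and we may normalize coordinates freely. (b) If $f(x)=ax+b$ compresses $S$ with orbit $s_{0},\dots,s_{N-1}$ (so $s_{i+1}=f(s_{i})$), then $s_{i+1}-s_{i}=a^{i}(s_{1}-s_{0})$ is a geometric progression of ratio $a$; since the $s_{i}$ are distinct these terms are nonzero, so $a=(s_{2}-s_{1})/(s_{1}-s_{0})$ and $b=s_{1}-as_{0}$ are determined by the ordering, and the reversed sequence is an orbit of $f^{-1}$. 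Thus $f^{-1}$ also compresses $S$, and it suffices to prove that for $N\ge5$ the set $S$ admits exactly one orbit ordering up to reversal.

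\emph{Step 2 (positive ratio is rigid; normalization).} If $a>0$ (in particular if $a=1$, i.e. $S$ is an arithmetic progression) then all differences $s_{i+1}-s_{i}$ share a sign, so the $f$-orbit is strictly monotone and hence is the increasing sorted listing of $S$ or its reverse. Consequently any two compressing functions with positive ratio have orbits equal to the sorted listing up to reversal, so by Step~1 they agree modulo the function inverse; in particular this disposes of all pairs of positive-ratio functions, with no hypothesis on $|S|$. It remains to show that, once one compressing function $f$ is fixed, no second incompatible one exists when $N\ge5$; here I would normalize (using the fixed point $p=b/(1-a)$ of $f$ and its starting element as reference points) so that either $S=\{0,1,\dots,N-1\}$ with $f(x)=x+1$, or $S=\{1,a,a^{2},\dots,a^{N-1}\}$ with $f(x)=ax$ and $a\notin\{0,1\}$ (distinctness forcing $a\ne-1$ once $N\ge3$). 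Let $g(x)=cx+d$ be an arbitrary compressing function, with ratio $c$, and when $c\ne1$ with fixed point $q$, so that its orbit set is $\{q+\delta c^{i}:0\le i\le N-1\}$ for some $\delta\ne0$.

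\emph{Step 3 (pinning down $c$, then $g$).} If $c>0$ we are done by Step~2, so suppose $c<0$. Then the $g$-orbit zig-zags: its differences alternate in sign, so $S$ is partitioned into the set of even-indexed orbit terms and that of odd-indexed terms, each of which is strictly monotone and has consecutive differences forming a geometric progression of ratio $c^{2}$ (since $t_{i+2}-t_{i}=\delta c^{i}(c^{2}-1)$). Because $N\ge5$, at least one of these two blocks has $\ge3$ elements, and that block, sorted, is exactly the top (or bottom) $\lceil N/2\rceil$ elements of $S$; its difference sequence is then a geometric progression of ratio $c^{2}$, and also of the ratio read off from $S$ itself --- which is $1$ when $S$ is an arithmetic progression, $a$ when $S=\{1,a,\dots\}$ with $a>0$, and $a^{2}$ when $a<0$. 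This forces $c^{2}=1$ in the arithmetic-progression case, impossible since then $g$ would have only two distinct orbit values; it gives $c^{2}=a^{2}$, hence $c=a$ (as $-a>0$ is excluded), in the case $a<0$; and it gives $c^{2}=a$ in the case $a>0$, whereupon writing out the first few orbit equations $g(t_{0})=t_{1}$, $g(t_{1})=t_{2}$, $g(t_{2})=t_{3}$ --- whose right-hand sides are forced powers of $a$ by the zig-zag pattern --- yields a polynomial relation in $a$ incompatible with $c^{2}=a$, so no negative-ratio $g$ exists when $a>0$. We are left with $c=a$ (or, running the same argument through $f^{-1}$, $c=1/a$). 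Then the orbit set $\{q+\delta a^{i}\}$ of $g$ equals $S=\{a^{i}\}$, so $S$ is invariant under the affine map $x\mapsto\delta x+q$; but the group of real affine maps preserving $\{1,a,\dots,a^{N-1}\}$ is trivial for $N\ge3$, $a\notin\{0,\pm1\}$ (a non-trivial real affine map has linear part $\pm1$, else it has infinite order on a finite set; a translation fixing a finite set is trivial; and a reflection symmetry would force $\min S+\max S$ to equal the sum of the second-smallest and second-largest elements of $S$, which for powers of $a$ forces $a^{2}=1$). Hence $\delta=1$, $q=0$, and $g=f$ (respectively $g=f^{-1}$).

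\emph{The main obstacle.} The crux is Step~3: the bookkeeping of the even/odd sub-sequences once the ratios are negative --- in particular the sub-cases according to the sign of $\delta$ and the parity of $N$ --- together with the handful of ``near-$\pm1$'' coincidences that must be excluded by direct polynomial computation. The hypothesis $|S|\ge5$ is used precisely there: it guarantees that each sub-sequence is long enough for its difference list to determine a ratio, and that the second-smallest and second-largest elements are distinct from the extremes. The hypothesis is also sharp, since for $|S|=3$ the set $\{-2,1,4\}$ is compressed both by $x\mapsto x+3$ (orbit $-2,1,4$) and by $x\mapsto-2x$ (orbit $1,-2,4$), which are not inverses of each other.
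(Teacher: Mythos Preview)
Your route is genuinely different from the paper's. The paper never normalizes $S$ affinely; it simply sorts $S$ as $x_1<\cdots<x_N$ and argues that an affine orbit can traverse $S$ in only three orderings up to reversal: the sorted order (when $a>0$), or one of the two ``outside-in'' zig-zags $x_1,x_N,x_2,x_{N-1},\ldots$ and $x_N,x_1,x_{N-1},x_2,\ldots$ (when $a<0$). It separates the monotone ordering from the zig-zags by noting that in the former the sorted gaps $x_{i+1}-x_i$ are a geometric progression, while in the latter the two $f^{2}$-half-orbits converge to the common fixed point, so the sorted gaps first decrease and then increase. It then separates the two zig-zags by a clean composition trick: if $f_1$ realizes ordering (ii) and $f_2$ realizes ordering (iii), then $f_2\circ f_1$ fixes every $x_i$, hence is the identity, so $f_2=f_1^{-1}$, contradicting $f_1^{-1}(x_2)=x_N\neq x_{N-2}=f_2(x_2)$. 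No case analysis on signs of $\delta$, parities, or block sizes is needed.

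Your normalization-plus-symmetry-group strategy is viable, but as you yourself flag under ``the main obstacle'', Step~3 is not actually carried out. The sentence ``writing out the first few orbit equations \ldots\ yields a polynomial relation in $a$ incompatible with $c^2=a$'' is a promise, not an argument; one can indeed push it through (for instance, with $a>1$ and $c=-\sqrt a$, three consecutive zig-zag equations force $c=-a^{N-2}$, hence $a^{2N-5}=1$), but you have not done so, and the sub-cases on which half of $S$ is the large block, on the sign of $\delta$, and on the parity of $N$ remain untouched. There is also a logical slip earlier: ``If $c>0$ we are done by Step~2'' is only valid when $a>0$ as well; when $a<0$ and $c>0$ you must first swap the roles of $f$ and $g$ and renormalize before Step~2 applies, which you do not say. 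None of this is irreparable, but the paper's composition trick reaches the same conclusion with essentially no computation, which is worth learning.
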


\begin{proof}
Let $S = {x_1,x_2,\cdots,x_N}$ be the given set and without loss of generality we can assume that $x_i < x_{i+1}$. Then the only possible ways in which the set $S$ can be produced by iterating a linear function $f=ax+b$ are \\ \\
(i) $x_1,x_2,\cdots,x_N$ and its reverse order $x_N,x_{N-1},\cdots,x_1$, when $a>0$\\
or\\
(ii) $x_1,x_N,x_2,x_{N-1},\cdots$ and its reverse order $\cdots,x_{N-1},x_2,x_N,x_1$, when $a<0$\\
or\\
(iii) $x_N,x_1,x_{N-1},x_2,\cdots$ and its reverse order $\cdots,x_2,x_{N-1},x_1,x_N$, when $a<0$.\\ \\
For any set produced in the order (i), it is easy to see that $\frac{x_{i+1}-x_{i}}{x_{i}-x_{i-1}} = a$ (by the definition of the slope of a line), i.e, the difference between the consecutive elements either increases (for $a>1$) or decreases (for $a<1$) or remains constant (for $a=1$). But for any set produced by in the order (ii) or (iii), the difference between the consecutive elements intially decreases and then steadily increases, specifically we have, $x_3 - x_2 < x_2-x_1$ and $x_{N-1}-x_{N-2} < x_N - x_{N-1}$. This can be seen by defining $h \equiv f^2 \equiv a^2x+c$ (the composition of $f$ with itself), for some constant $c$. Then we have, $h(x_1) = x_2, h(x_2) = x_3,\cdots, h(x_N)=x_{N-1},h(x_{N-1})=x_{N-2}$. The series $x_1,x_2,\cdots$ and the series $x_N,x_{N-1},\cdots$ converges to the common point $x_0$ which is the only fixed point of $h$, i.e, $h(x_0)=x_0$. Hence the difference between the consecutive elements of $S$ (represented in the ascending order) will steadily decrease for a while and then steadily increase. Hence it follows that for sufficiently large $N$, any set produced in the order (i) cannot be produced in the order (ii) or (iii) and vice versa.

Now, we just need to show that for any set produced in the order (ii), it cannot be produced in the order (iii). To this end, let $f_1(x)=a_1x+b_1$, produce the set $S$ in the order (ii), namely $x_1,x_N,x_2,x_{N-1},\cdots$, with $f_1(x_1)=x_N,f_1(x_N)=x_2,\cdots$. Similarly, let $f_2(x)=a_2x+b_2$ produce the same set $S$ in the order (iii), namely $x_N,x_1,x_{N-1},x_2,\cdots$, with $f_2(x_N)=x_1,f_2(x_1)=x_{N-1},\cdots$. Defining $h(x) \equiv f_2 \circ f_1(x) \equiv a_2a_1x+a_2b_1+b_2$, we have $h(x_1)=x_1,h(x_2)=x_2,\cdots,h(x_N)=x_N$. Then $h(x)=x, \forall x$, as it is the \emph{only} linear function which can have  more than one fixed point. Then $f_2 = f_1^{-1}$. This results in a contradiction as $f_1^{-1}(x_2)=x_N$ but $f_2(x_2)=x_{N-2}$ by definition.

It is easy to check that for $N \geq 5$, the aforementioned arguments holds.
\end{proof} 

It is an interesting open question to prove or disprove the existence of the such bound, $M(K)$, for $K \geq 2$. We strongly believe $M(K)$ does exist and furthermore $M(K) = \Theta(K)$, i.e, varies \emph{linearly} with $K$. We also observed in our experiements that, given a set $S$ of size $N$, the number of possible permutations by which it is compressible using a $K^{th}$ degree polynomial drops down exponentially from $N!$ for $K=N-2$, to one when $K = O(N)$. Such exponential drop is uncharacteristic of NP-complete problems.

\subsection{Experimental results}
We now corroborate our claim with the following experimental results. Firstly, we observed that the set $S = \{1,2,\cdots,N-1,N\}$ is one of the very few sets which are compressible for many different permutations of $S$ for a given $K$. Since we did a brute force search over all the $N!$ possible permutations of $S$, we could conduct the experiment only for small values of $N$ ranging from 6 to 14. Since a $K^{th}$ degree polynomial subsumes the $K-1^{th}$ degree polynomial and the linear functions $f_1(x)=x+1$ and $f_2(x)=x-1$ can compress $S$, we can stop at the value of $K$ where the number of possible permutations equals 2. The results adumbrated in the table below unveil that this stopping value of $K$ is not far away from $N-2$.

\begin{center}
\begin{table}[ht!]
\begin{minipage}[c]{0.3\linewidth}\centering
\caption{$N=14$}
\begin{tabular}{|l|l|}
\hline 
K  & \# permutations\tabularnewline
\hline 
12  & 14!\tabularnewline
\hline 
11  & 836644\tabularnewline
\hline 
10  & 24\tabularnewline
\hline
9  & 4\tabularnewline
\hline
8  & 2\tabularnewline
\hline
\end{tabular}
\end{minipage}
\hspace{0.1cm}
\begin{minipage}[c]{0.3\linewidth}\centering
\caption{$N=13$}
\begin{tabular}{|l|l|}
\hline 
K  & \# permutations\tabularnewline
\hline 
11  & 13!\tabularnewline
\hline 
10  & 394204\tabularnewline
\hline 
9  & 46\tabularnewline
\hline
8  & 2\tabularnewline
\hline
7  & 2\tabularnewline
\hline
\end{tabular}

\end{minipage}
\hspace{0.1cm}
\begin{minipage}[c]{0.3\linewidth}\centering
\caption{$N=12$}
\begin{tabular}{|l|l|}
\hline 
K  & \# permutations\tabularnewline
\hline 
10  & 12!=479001600\tabularnewline
\hline 
9  & 23070\tabularnewline
\hline 
8  & 28\tabularnewline
\hline
7  & 4\tabularnewline
\hline
6  & 2\tabularnewline
\hline
\end{tabular}
\end{minipage}
\end{table}
\par\end{center}

\begin{center}
\begin{table}[ht!]
\begin{minipage}[c]{0.3\linewidth}\centering
\caption{$N=11$}
\begin{tabular}{|l|l|}
\hline 
K  & \# permutations\tabularnewline
\hline 
9  & 11!=39916800\tabularnewline
\hline 
8  & 11798\tabularnewline
\hline 
7  & 32\tabularnewline
\hline
6  & 2\tabularnewline
\hline
5  & 2\tabularnewline
\hline
\end{tabular}
\end{minipage}
\hspace{0.1cm}
\begin{minipage}[c]{0.3\linewidth}\centering
\caption{$N=10$}
\begin{tabular}{|l|l|}
\hline 
K  & \# permutations\tabularnewline
\hline 
8  & 10!=3628800\tabularnewline
\hline 
7  & 1726\tabularnewline
\hline 
6  & 12\tabularnewline
\hline
5  & 4\tabularnewline
\hline
4  & 2\tabularnewline
\hline
\end{tabular}
\end{minipage}
\hspace{0.1cm}
\begin{minipage}[c]{0.3\linewidth}\centering
\caption{$N=9$}
\begin{tabular}{|l|l|}
\hline 
K  & \# permutations\tabularnewline
\hline 
7  & 9!=362880\tabularnewline
\hline 
6  & 398\tabularnewline
\hline 
5  & 8\tabularnewline
\hline
4  & 2\tabularnewline
\hline
3  & 2\tabularnewline
\hline
\end{tabular}
\end{minipage}
\end{table}
\par\end{center}

\begin{center}
\begin{table}[ht!]
\begin{minipage}[c]{0.3\linewidth}\centering
\caption{$N=8$}
\begin{tabular}{|l|l|}
\hline 
K  & \# permutations\tabularnewline
\hline 
6  & 8!=40320\tabularnewline
\hline 
5  & 68\tabularnewline
\hline 
4  & 2\tabularnewline
\hline
\end{tabular}
\end{minipage}
\hspace{0.1cm}
\begin{minipage}[c]{0.3\linewidth}\centering
\caption{$N=7$}
\begin{tabular}{|l|l|}
\hline 
K  & \# permutations\tabularnewline
\hline 
5  & 7!=5040\tabularnewline
\hline 
4  & 66\tabularnewline
\hline 
3  & 2\tabularnewline
\hline
\end{tabular}
\end{minipage}
\hspace{0.1cm}
\begin{minipage}[c]{0.3\linewidth}\centering
\caption{$N=6$}
\begin{tabular}{|l|l|}
\hline 
K  & \# permutations\tabularnewline
\hline 
4  & 6!=720\tabularnewline
\hline 
3  & 8\tabularnewline
\hline 
2  & 2\tabularnewline
\hline
\end{tabular}
\end{minipage}
\end{table}
\par\end{center}

From the table, it is clear that the number of permutations by which the given set $S$ (compressing of integers from 1 to $N$) is compressible, using a $K^{th}$ degree polynomial, drops down exponentially faster from $N!$ for $K=N-2$ to just 2 when $K \leq N-6$.

Next, for small values of $K$ ($K=2,3$), we did a brute force search over all the possible sets of a given cardinality $N$, consisting of integers from a broad range $[-R,R]$, to verify whether the bound $M(K)$ exist. Since we restrict ourselves to sets $S$ over the rationals $\mathbb{Q}$ (only for which the notion of complexity is well defined), from theorem (\ref{affineTrans}) it follows that, considering sets over $\mathbb{Q}$ is equivalent to considering sets over the integers $\mathbb{Z}$. Without loss of generality, we can also assume that $0 \in S$ (as the elements of the set can be moved so that one of the element is 0) and it is the starting element from which the other numbers of the set are produced by iterating a polynomial function. We now illustrate in detail for the case $K=2$. The test for $K=3$ was done in the similar fashion.

We considered \emph{all possible triplets} of numbers $x_1, x_2, x_3$ satisfying the following set of constraints, namely
\begin{itemize}
\item $x_i \not= x_j$ if $i \not= j$.
\item $x_i \not= 0$, $i \in \{1,2,3\}$.
\item $-R \leq x_i \leq R$,$i \in \{1,2,3\}$. 
\item The sequence $0, x_1, x_2, x_3$ cannot be produced by iterating a linear function.
\end{itemize}
We set $R=10,000$. We then constructed a quadratic function $f=ax^2+bx+c$ satisfying $f(0)=x_1$, $f(x_1)=x_2$ and $f(x_2)=x_3$. Upon iterating $f$, we obtained the remaining numbers of the set $S$. Since the numbers of $S$ can be as large as $10^{40}$, we used the GMP package--acronymed for GNU Multiple Precision library--which provides arbitrary precision arithmetic. When the size of $S$ equals 6,7 or 8, we found no other permutation of its numbers for which the generated set can be produced by iterating a different quadratic function. Hence we strongly believe that the bound $K=2$ is $M(K)=6$. Similarly for $K=3$, we conjecture that $M(K)=8$.

The experiments seems to suggest the the bound $M(K)$ may exist even for other values of $K$. If it does exists, then the K-compressibility problem will be in the complexity class $UP$ whenever $N \geq M(K)$. Hence we believe that, unless $UP=NP$, it may be hard to show that the K-compressibility problem is infact NP-hard.
\section{$\epsilon$ K-Compressibility of sets}
\label{epsilonCompressibility}
The previous sections dealt with lossless compression where the elements of the given set $S$ are reproduced \emph{exactly} by iterating a polynomial $f$ of a relatively smaller degree $K$. Exact representation of a set is typically a non-requirement and most real world scenarios do not impose such a hard constraint. In this section, we formally define the notion of $\epsilon$ K-compressibility and then provide the \emph{necessary} condition that a given set $S$ should satisfy for it to be K-compressible while allowing for small perturbation of its elements up to an $\epsilon$ distance.

We start with defining the notion of $\epsilon$-perturbation.
\begin{definition}
Given $\epsilon > 0$, $y$ is an $\epsilon$-perturbation of $x$ if $|y-x| \leq \epsilon |x|$.
\end{definition}

\begin{definition}
A set $S = \{x_1,x_2,\ldots,x_N\}$ consisting of $N$ numbers is $\epsilon$ K-compressible if there exist a K-compressible set $\tilde{S} = \{\tilde{x}_1,\tilde{x}_2,\ldots,\tilde{x}_N\}$ where $\tilde{x}_i$ is the $\epsilon$-perturbation of $x_i$.
\end{definition}
Intuitively, it is clear that larger the value of $\epsilon$, better is the compression ratio. But given an $\epsilon$ and $K< N-2$, how to determine whether the set is $\epsilon$ K-compressible? Closely following the lines discussed under section (\ref{EqvTrans}), we now provide an approach to tackle this problem.

As the elements of the given set $S$ can be shifted and the given $\epsilon$ suitably adjusted, we can restrict ourselves to the case where the elements of the set $S$ are assumed to be positive, i.e, $x_i > 0, \forall i$. Let $\tilde{x}_i$ denote the $\epsilon$-perturbation of $x_i$. Pick a $y \in S$ and assume its $\epsilon$-perturbation to be the last element in the orbit of a $K^{th}$ degree polynomial $f$ starting from $\tilde{x}_0$, i.e, $f^{N-1}(\tilde{x}_0) = \tilde{x}_N$, where $\tilde{x}_0$ is an $\epsilon$-perturbation of some element $x_0 \in S$. Without loss of generality let $x_N = y$. Let $x_i < x_{i+1} ,1 \leq i \leq N-2$. Assume $\epsilon$ to be small enough so that $(1+\epsilon)x_i < (1-\epsilon)x_{i+1}$ ,i.e, $\tilde{x}_i < \tilde{x}_{i+1}$. The problem then amounts to finding the $\tilde{x}_i's$ such that the following over-determined Vandermonde system
\[ 
\left[
\begin{matrix}
1&\tilde{x}_1&\tilde{x}_1^2&\dots&\tilde{x}_1^K\\
1&\tilde{x}_2&\tilde{x}_2^2&\dots&\tilde{x}_2^K\\
\vdots&\vdots&\vdots&\vdots&\vdots\\
1&\tilde{x}_{N-1}&\tilde{x}_{N-1}^2&\dots&\tilde{x}_{N-1}^K
\end{matrix}
\right]
\left[
\begin{matrix}
a_0\\a_1\\\vdots\\a_K
\end{matrix}
\right] = \left[
\begin{matrix}
v_1\\v_2\\\vdots\\v_{N-1}
\end{matrix}
\right]
\]
has a solution which also satisfies the Hamiltonian path constraint. Each $v_i$ is an $\epsilon$ perturbation of some $x_i \in S$ and equal to one of the $\tilde{x}_j's$. $v_N$ is the starting element $\tilde{x}_0$. Recall that in section(\ref{EqvTrans}) we proceeded forward by considering the $QR$ decomposition of the above Vandermonde matrix.  Since $\tilde{x}_i's$ are by themselves unknown and are $\epsilon$ perturbations of $x_i$, bounding the coefficients of the matrices $Q$ and $R$ as a function of $\epsilon$ seems to be hard problem by itself. Hence we consider a slightly different formulation which involves computing determinants of Vandermonde matrices for which closed form solutions exist and hence the coefficients of the resultant matrix can easily be bounded.

Finding the $\tilde{x}_i's$ for which the above Vandermonde system has a solution is equivalent to enforcing that the determinant of \emph{every} $K+2 \times K+2$ sub-matrix of the following matrix
\[
\left[
\begin{matrix}
1&\tilde{x}_1&\tilde{x}_1^2&\dots&\tilde{x}_1^K&v_1\\
1&\tilde{x}_2&\tilde{x}_2^2&\dots&\tilde{x}_2^K&v_2\\
\vdots&\vdots&\vdots&\vdots&\vdots&\vdots\\
1&\tilde{x}_{N-1}&\tilde{x}_{N-1}^2&\dots&\tilde{x}_{N-1}^K&v_{N-1}
\end{matrix}
\right]
\]
is zero. Keeping the first $K+1$ rows fixed and changing the $K+2$ row and expanding w.r.t to the last column, we obtain the following $N-K-2$ linear equations on $v_1,v_2,\ldots,v_{N-1}$ namely,
\begin{eqnarray}
\label{EqvLinearEq}
\tilde{a}_{1,1}v_1+\tilde{a}_{1,2}v_2+\cdots+\tilde{c}v_{K+2}+0v_{K+3}+\cdots+0v_{N-1} &=& 0 \nonumber \\
\tilde{a}_{2,1}v_1+\tilde{a}_{2,2}v_2+\cdots+0v_{K+2}+\tilde{c}v_{K+3}+\cdots+0v_{N-1} &=& 0 \nonumber \\
\vdots\nonumber \\
\tilde{a}_{N-K-2,1}v_1+\tilde{a}_{N-K-2,2}v_2+\cdots+0v_{K+2}+\cdots+\tilde{c}v_{N-1}&=& 0
\end{eqnarray}
where $\tilde{a}_{i,j} $ and $c$ are the determinants of the following Vandermonde matrices respectively. 
\[ \tilde{a}_{i,j} = (-1)^{K+j}\left|
\begin{matrix}
1&\tilde{x}_1&\tilde{x}_1^2&\dots&\tilde{x}_1^K\\
\vdots&\vdots&\vdots&\vdots&\vdots\\
1&\tilde{x}_{j-1}&\tilde{x}_{j-1}^2&\dots&\tilde{x}_{j-1}^K\\
1&\tilde{x}_{j+1}&\tilde{x}_{j+1}^2&\dots&\tilde{x}_{j+1}^K\\
\vdots&\vdots&\vdots&\vdots&\vdots\\
1&\tilde{x}_{K+1}&\tilde{x}_{K+1}^2&\dots&\tilde{x}_{K+1}^K\\
1&\tilde{x}_{K+1+i}&\tilde{x}_{K+1+i}^2&\dots&\tilde{x}_{K+1+i}^K
\end{matrix}
\right|
\]
\[ \tilde{c} = \left|
\begin{matrix}
1&\tilde{x}_1&\tilde{x}_1^2&\dots&\tilde{x}_1^K\\
1&\tilde{x}_2&\tilde{x}_2^2&\dots&\tilde{x}_2^K\\
\vdots&\vdots&\vdots&\vdots&\vdots\\
1&\tilde{x}_{K+1}&\tilde{x}_{K+1}^2&\dots&\tilde{x}_{K+1}^K
\end{matrix}
\right|
\]
For succinct representation, we can add an additional insignificant term $0*v_N$ to the left hand side of each these equations. The equations can then be represented in the matrix form $\tilde{A}*\vec{v} = 0$ where $\tilde{A}$ is the $N-K-2 \times N$ matrix containing the coefficients (with the last column of all 0's) and $\vec{v} = [v_1,v_2,\ldots,v_{N}]^T \in \mathbb{Q}^N$ representing the variables. If we denote $\vec{\tilde{x}} = [\tilde{x}_1, \tilde{x}_2,\ldots,\tilde{x}_N]^T \in \mathbb{Q}^N$, the problem then reduces to finding a permutation $P$ satisfying the Hamiltonian path constraint such that
\begin{equation}
\label{APVProblem_2}
\tilde{A}*P*\vec{\tilde{x}} = 0.
\end{equation}
Define a $N-K-2 \times N$ matrix $A$ in the exact way we defined $\tilde{A}$ above except replacing the $\tilde{x}_i's$ in the definition with the corresponding $x_i's$. Notice that $A$ can be pre-computed using which we can bound the entries of $\tilde{A}$ as follows. 

To start with consider the determinant of an $K+1 \times K+1$ Vandermonde matrix $V$ which can be written in closed form as
\begin{equation*}
d(\vec{x}) =  d(x_1,x_2,\ldots,x_K) = det(V) = \prod_{1\leq i < j \leq K}(x_j - x_i).
\end{equation*}
Let $\eta_i = \epsilon |x_i|$. For small $\epsilon$, we can use the second order Taylor expansion to approximate
\begin{equation}
\label{quadProg}
d(\vec{x}+\vec{\xi}) \approx d(\vec{x})+\vec{\xi}^T \frac{\partial d}{\partial \vec{x}}+ \frac{1}{2}\vec{\xi}^T \frac{\partial^2 d}{\partial \vec{x}^2} \vec{\xi}
\end{equation}
where the first partials are given by
\begin{equation*}
\frac{\partial d}{\partial x_i} = d(\vec{x})\left(\sum_{k \not=i}\frac{1}{x_i-x_k} \right)
\end{equation*}
and the entries of the Hessian matrix $\frac{\partial^2 d}{\partial \vec{x}^2}$ are given by
\[
\frac{\partial^2 d}{\partial x_i \partial x_j}= \left\{
\begin{array}{ll}
d(\vec{x})\left[\left(\sum_{i \not= k}\sum_{l \not=j}\frac{1}{(x_i - x_k)(x_j - x_l)}\right) + \frac{1}{(x_j - x_i)^2}\right] & \mbox{if $i \not = j$}\\
&\\
d(\vec{x})\left[\left(\sum_{i \not= k}\sum_{l \not=j}\frac{1}{(x_i - x_k)(x_j - x_l)}\right) - \sum_{i \not=k}\frac{1}{(x_i-x_k)^2}\right]& \mbox{if $i=j$}
\end{array}
\right.
\]
Here $\vec{\xi} = [\xi_1,\xi_2,\ldots,\xi_K]^T$ where each component $\xi_i$ is bounded by $\xi_i \leq \eta_i$. Let $d_{max}$ and $d_{min}$ respectively denote the maxima and minima of $d(\vec{x}+\vec{\xi})$ subject to the condition that $\xi_i \leq \eta_i$. Without loss of generality we can assume $x_i < x_{i+1}$ and for small $\epsilon$, $x_i+\eta_i < x_{i+1}- \eta_{i+1}$ and hence both $d(\vec{x}), d(\vec{x}+\vec{\xi}) > 0$. Let $ d_{min} = d(\vec{x})(1-\delta_1)$ and $d_{max} = d(\vec{x})(1+\delta_2)$ where $\delta_1 ,\delta_2 > 0$. Since there is no guarantee for the Hessian matrix to be positive semi-definite, finding the maxima and the minima and hence the bounding coefficients $\delta_1$ and $\delta_2$, of the quadratic program in equation (\ref{quadProg}) may be a NP-hard problem \cite{Pardalos91}. Nevertheless, quadratic programming is well-studied field and the literature is inundated with algorithms which can efficiently compute solutions for a quadratic minimization/maximization problems \cite{Cottle68, Ye92, Fu98, QPpage}. As we are only interested in bouding the value of the determinant, it is not necessary to \emph{exactly} compute the maxima and minima of the quadratic program and a good approximation for the same will suffice.

The above set up can now be employed to bound the entries of $\tilde{A}$ using the entries of $A$. Recall that each entry of both the matrices are either zero or obtained as the determinant of a particular Vandermonde matrix. Running the quadratic program (equation(\ref{quadProg})) for each non-zero entry of $\tilde{A}$, we can easily bound each $\tilde{a}_{i,j}$--$(i,j)^{th}$ entry of $\tilde{A}$--by the corresponding $(i,j)^{th}$ entry of $A$, i.e,
\begin{equation}
\label{aijBound}
(1-\gamma_{1,i,j})|a_{i,j}| \leq |\tilde{a}_{i,j}| \leq (1+\gamma_{2,i,j}) |a_{i,j}|
\end{equation}
where $\gamma_{1,i,j},\gamma_{2,i,j} > 0$.
 Since the entries of both $\tilde{A}$ and $A$ can be signed numbers, we can choose a large enough positive constant $D$ using the above bounds and shift the entries of both the matrices by $D$ so that they are strictly greater than zero. Denote the shifted $A$ and $\tilde{A}$ by matrices $B$ and $\tilde{B}$ respectively and its corresponding shifted $(i,j)^{th}$ entry by $b_{i,j}$ and $\tilde{b}_{i,j}$. Using the bounds for $\tilde{a}_{i,j}$ (equation (\ref{aijBound})) we can bound each $\tilde{b}_{i,j}$ by
\begin{equation}
\label{bijBound}
(1-\delta_{1,i,j})b_{i,j} \leq \tilde{b}_{i,j} \leq (1+\delta_{2,i,j})b_{i,j}
\end{equation}
where
\begin{equation}
\label{delta1Def}
\delta_{1,i,j}= \left\{
\begin{array}{ll}
\frac{\gamma_{1,i,j} a_{i,j}}{b_{i,j}}&\mbox{if $a_{i,j} \geq 0$}\\
&\\
\frac{-\gamma_{2,i,j} a_{i,j}}{b_{i,j}} &\mbox{if $a_{i,j} < 0$}
\end{array}
\right.
\end{equation}
 and
\begin{equation}
\label{delta2Def}
\delta_{2,i,j}= \left\{
\begin{array}{ll}
\frac{\gamma_{2,i,j} a_{i,j}}{b_{i,j}}&\mbox{if $a_{i,j} \geq 0$}\\
&\\
\frac{-\gamma_{1,i,j} a_{i,j}}{b_{i,j}} &\mbox{if $a_{i,j} < 0$}
\end{array}
\right.
\end{equation}
Now if we use the shifted matrix $\tilde{B}$ instead of $\tilde{A}$, the problem of finding the permutation $P$ that solves equation(\ref{APVProblem_2}) is tantamount to finding the permutation $P$ such that
\begin{equation*}
\tilde{B}*P*\vec{\tilde{x}} = \vec{\tilde{\alpha}}
\end{equation*}
where each component $\tilde{\alpha_i}$ of $\vec{\tilde{\alpha}}$ equals $\tilde{\alpha_i} = D*\left(\sum_{j=1}^{N}\tilde{b}_{i,j}\right)$. Again we pick a large enough $C_{\max}$ similar to the one under section(\ref{Trans2EWPM}) and define $\tilde{h}_u$ and $\tilde{\alpha}$ according to equations (\ref{summedWeights}) and (\ref{alphaDef}) respectively, where we replace $b_{i,u}$ and $\alpha_i$ with $\tilde{b}_{i,u}$ and $\tilde{\alpha}_i$. Employing the series of transformation discussed under section(\ref{EqvTrans}), the problem again reduces to finding the permutation $\pi \in S_N$ such that 
\begin{equation}
\label{permProblem}
\sum_{u=1}^N \tilde{h}_{\pi(u)} \tilde{x}_{\pi(u+1)} = \tilde{\alpha}
\end{equation}
where $\pi(N+1)$ refers to $\pi(1)$.
Define $\alpha_i = D*\left(\sum_{j=1}^N b_{i,j}\right)$ where $b_{i,j}$ denote the $(i,j)^{th}$ entry of the matrix $B$. Define $h_u$ and $\alpha$ according to equations (\ref{summedWeights}) and (\ref{alphaDef}) and consider the problem of finding $\pi \in S_N$ such that
\begin{equation*}
\sum_{u=1}^N h_{\pi(u)} x_{\pi(u+1)} = \alpha.
\end{equation*}
The bounds for $\tilde{b}_{i,j}$ using $b_{i,j}$ (equation(\ref{bijBound})) can be used to bound $\tilde{h}_u$ in terms of $h_u$ as
\begin{equation*}
(1-\beta_{1,u})h_u \leq \tilde{h}_u \leq (1+\beta_{2,u})h_u
\end{equation*}
where
\begin{equation*}
\beta_{k,u} = \frac{\sum_{i=1}^{N-K-2} \delta_{k,i,u}b_{i,u} (NC_{\max}+1)^{i-1}}{h_u}
\end{equation*}
for $k \in \{1,2\}$ and $\delta_{1,1,j}$ and $\delta_{2,1,j}$ are given by equations (\ref{delta1Def}) and (\ref{delta2Def}) respectively. It is worth mentioning that $\beta_{k,u} \geq 0$ and since $h_u$ and $\tilde{h}_u$ are both positive, $\beta_{1,u} < 1$. Let $\beta_{k,\max} = \max_u \{\beta_{k,u}\}$ for $k \in \{1,2\}$. Then it is easy to see that $(1-\beta_{1,max})h_u \leq \tilde{h}_u \leq (1+\beta_{2,max})h_u$. Also recollect that $\tilde{x}_u$ is the $\epsilon$-perturbation of $x_u$ and hence by definition $(1-\epsilon)x_u \leq \tilde{x}_u \leq (1+\epsilon)x_u$. Using the bounds for $\tilde{h}_u$ and $\tilde{x}_u$, for any permutation $\pi \in S_N$, we can bound the sum $\sum_{u=1}^N \tilde{h}_{\pi(u)} \tilde{x}_{\pi(u+1)}$ by
\begin{equation}
\label{boundForSum}
q_1\sum_{u=1}^N h_{\pi(u)} x_{\pi(u+1)} \leq \sum_{u=1}^N \tilde{h}_{\pi(u)} \tilde{x}_{\pi(u+1)} \leq q_2\sum_{u=1}^N h_{\pi(u)} x_{\pi(u+1)}
\end{equation} 
where $q_1 = (1-\beta_{1,\max})(1-\epsilon)$ and $q_2 = (1+\beta_{2,\max})(1+\epsilon)$. Similarly each $\tilde{\alpha_i}$ can be bounded using $\alpha_i$ as $(1-\theta_{1,i})\alpha_i \leq \tilde{\alpha_i} \leq (1+\theta_{2,i})\alpha_i$ where 
\begin{equation*}
\theta_{k,i} = \frac{\sum_{j=1}^N\delta_{k,i,j} b_{i,j}}{\sum_{j=1}^N b_{i,j}},
\end{equation*}
$ k \in \{1,2\}$ using which $\tilde{\alpha}$ can be bounded in terms of $\alpha$ as 
\begin{equation}
\label{boundForAlpha}
(1-p_1)\alpha \leq \tilde{\alpha} \leq (1+p_2)\alpha
\end{equation}
where 
\begin{equation*}
p_k = \frac{\sum_{i=1}^{N-K-2} \theta_{k,i} \alpha_i (NC_{\max}+1)^{i-1}}{\alpha}.
\end{equation*}
From the equations (\ref{boundForSum}) and (\ref{boundForAlpha}), it is clear that for the permutation $\pi$ that satisfies (\ref{permProblem}),
\begin{equation}
\label{necessaryCondition}
\frac{1-p_1}{q_1} \alpha \leq \sum_{u=1}^{N} h_{\pi(u)}x_{\pi(u+1)} \leq \frac{1+p_2}{q_2} \alpha,
\end{equation}
which gives us the necessary condition for the given set $S$ to be $\epsilon$ K-compressible with the $\epsilon$-perturbation of $x_N$ assumed as the last element in the orbit of a $K^{th}$ degree polynomial. It is worth emphasizing that all the constants in the above equation (\ref{necessaryCondition}) can be directly computed from the given set $S$. Repeating this process for the $N$ different choices of $x_N$ gives us $N$ different conditions similar in form to equation (\ref{necessaryCondition}) with the constants taking on different values for different constraints. Then the necessary condition for the set to be $\epsilon$ K-compressible is the existence of a permutation $\pi$ that satisfies at least one of these $N$ constraints.

Given a set $Z = \left\{\left(\begin{matrix}h_1\\x_1\end{matrix}\right),\left(\begin{matrix}h_2\\x_2\end{matrix}\right),\ldots,\left(\begin{matrix}h_N\\x_N\end{matrix}\right)\right\}$ of $N$ tuples and constants $c_1$ and $c_2$, the problem of finding the permutation $\pi$ such that
\begin{equation*}
c_1 \leq \sum_{u=1}^{N} h_{\pi(u)}x_{\pi(u+1)} \leq c_2
\end{equation*}
can be transformed to the multi-criteria traveling salesman problem as discussed under section (\ref{Trans2MultiTSP}) and the solutions given in \cite{Blaser08,Manthey09a,Manthey09b} can be adopted to solve our problem.
\section{Conclusion and open problems}
\label{conclusion}
We introduced a new notion of compressibility of sets of numbers where the set is represented through the coefficients of smaller $K^{th}$ degree polynomial $f$ and is produced by repeated composition of $f$ with itself. How effective the compression is depends upon the inherent structure the numbers constituting the set exhibits. Smaller the degree of the polynomial ($K$) w.r.t the size of the set ($N$), higher the compression ratio. We then provided approaches to determine whether a given set is K-compressible by transforming it to the  multi-criteria traveling salesman problem (TSP). The solutions developed for the multi-criteria TSP can then be availed to obtain solutions for our problem. Though we didn't formally prove it, we showed why it is unlikely for the K-compressibility problem to have a polynomial time solution. We then discussed about the notion of $\epsilon$ K-compressibility for the case of lossy compression, where we provided the necessary condition that the given set should satisfy for it to be $\epsilon$ K-compressible, which can be tested using the algorithms developed for multi-criteria TSP.

A random set of $N$ elements may not be K-compressible for all values of $K < N-2$. Hence trying to reproduce the elements of a set by successive composition of smaller degree polynomial function might have limited practical relevance in its current form. Nevertheless, our work on the current problem does provide headway to solve many generalizations of it which might have widespread applicability. These open problems are listed in the subsequent section and we would like to address them in our future work. We also pose some open problems in dynamical system theory which are interesting from a core theoretical perspective.

\subsection{List of open problems}
\begin{enumerate}
\item Is K-compressibility NP-hard? 
\item Given a set $S$ of $N$ numbers and integers $m$ and $K$, can $S$ be partitioned into $m$ subsets such that each subset is K-compressible with or without the same compressing function? The current work provides solution for the case $m=1$.
\item Can K-compressibility be extended for other class of functions like piece-wise linear functions or even splines? Does this class of function exhibit similarity to the problem defined above? Is \emph{``partitioning''} the function equivalent to partitioning the set?
\item Given a set $S$ of $N$ numbers and integers $m$ and $K$, can $S$ be made K-compressible by addition of utmost $m$ numbers to it?
\item For every $K \geq 2$, does $\exists$ integer $M(K)$, such that $\forall N \geq M(K)$, any K-compressible set $S$ consisting of $N$ distinct numbers is uniquely K-compressible, i.e there exist utmost one compressing function $f \in \prod_K, f \notin \prod_1$?.
\item For every $K$, does $\exists$ an integer $U(K)$, such that for any set $S$ with cardinality $N \geq U(K)$, the number of bijective mappings from $S$ to $S$ defined using $f \in \prod_K$ is utmost polynomial in $N$ and $K$? If this is true, then enforcing the Hamiltonian path constraint--defined in section(\ref{EqvTrans})--may be unnecessary if $N \geq U(K)$.
\end{enumerate}
\bibliographystyle{elsarticle-num}
\bibliography{DynamicsCompression}

\end{document}